\documentclass[a4,oneside,english,reqno,12pt]{amsart}
\usepackage{extsizes}
\usepackage{geometry}
\usepackage{amssymb,amsmath,amsthm,amsfonts,mathtools,esint,dsfont,bbm,yhmath}
\usepackage{appendix}
\usepackage{enumitem}
\usepackage{datetime}
\DeclareMathOperator{\tr}{Tr}
\usepackage[pdfencoding=auto,psdextra]{hyperref}
\usepackage{bookmark}
\usepackage{cleveref}
\usepackage{crossreftools}
\crefname{subsection}{subsection}{subsections}
\usepackage[foot]{amsaddr}
\usepackage{orcidlink}
\makeatletter

\usepackage{hyperref} 
\hypersetup{
  pdftitle=Homogeneous attractive Bose--Einstein condensates  with repulsive three-body interactions: the two-dimensional case,
  pdfauthor=Dinh-Thi Nguyen,
  pdfsubject=2D Homogeneous BECs,
}

\theoremstyle{plain}
\newtheorem{theorem}{Theorem}[section]
\newtheorem{lemma}[theorem]{Lemma}
\numberwithin{equation}{section}

\theoremstyle{remark}
\newtheorem{remark}[theorem]{Remark}
\newtheorem*{remark*}{Remark}

\newcommand{\wto}{\rightharpoonup}
\allowdisplaybreaks

\title[2D homogeneous BECs]{Homogeneous attractive Bose--Einstein condensates \\ with repulsive three-body interactions:\\ the two-dimensional case}

\author[D.-T. Nguyen]{Dinh-Thi Nguyen\,\orcidlink{0000-0003-4487-5557}}
\address{Faculty of Mathematics and Computer Science, University of Science, Ho Chi Minh City, Vietnam; and Vietnam National University, Ho Chi Minh City, Vietnam.}
\email{\href{mailto:ndthi@hcmus.edu.vn}{ndthi@hcmus.edu.vn}}

\subjclass[2020]{35J10, 35Q55, 81V70, 82D05}
\keywords{Bose--Einstein condensates, Gagliardo--Nirenberg inequality, nonlinear Schr{\"o}\-dinger equation, three-body interaction, two-body interaction}

\begin{document}
\begin{abstract}
We consider a homogeneous Bose gas composed of $N$ identical bosons occupying an infinite two-dimensional space. This gas exhibits both an attractive two-body interaction and a repulsive three-body interaction. Via the intermediate Hartree theory, we rigorously derive the homogeneous cubic-quintic nonlinear Schrödinger functional as the mean-field limit of the model. Our investigation focuses on the system’s behavior in relation to the two-body interaction.
\end{abstract}

\maketitle

\section{Introduction}

In this paper, we investigate the ground state properties of a two-dimensional homogeneous Bose gas with attractive two-body interactions and repulsive three-body interactions. Our aim is to establish a rigorous connection between the quantum many-body Hamiltonian and its effective mean-field descriptions, and to analyze the regimes in which Bose--Einstein condensations (BECs) and energy convergence can be observed. We here consider a system of $N\geq 3$ identical bosons in $\mathbb{R}^{2}$, described by the nonrelativistic Hamiltonian
\begin{align}\label{hamiltonian}
H_{a,b,N} := & \sum_{i=1}^{N} -\Delta_{x_i} - \frac{a}{N-1}\sum_{1\leq i<j\leq N} U^{\rm 2B}_{N^\alpha}(x_i-x_j) \nonumber \\
& + \frac{b}{(N-1)(N-2)} \sum_{1\leq i<j<k\leq N} U^{\rm 3B}_{N^\beta}(x_i-x_j) U^{\rm 3B}_{N^\beta}(x_i-x_k) ,
\end{align}
acting on the symmetric space $L^{2}_{\rm sym}(\mathbb{R}^{2N})$. The choices of coupling constants $1/(N-1)$ and $1/(N-1)(N-2)$ ensure that the kinetic and interaction energies are comparable in the mean-field limit $N \to \infty$. The parameters $a \geq 0$ and $b \geq 0$ are the strengths of the two-body and three-body interactions, respectively. The interaction terms $U^{\rm 2B}_{N^\alpha}$ and $U^{\rm 3B}_{N^\beta}$ are scaled through the parameters $\alpha,\beta>0$, i.e.,
\begin{equation}\label{scaled-interaction}
U^{\rm 2B}_{N^\alpha}(x) := N^{2\alpha}U^{\rm 2B}(N^\alpha x) \quad \text{and} \quad U^{\rm 3B}_{N^\beta}(x) : = N^{2\beta}U^{\rm 3B}(N^\beta x),
\end{equation}
in such a way that we may expect a well-defined semiclassical theory in the limit $N\to\infty$. Here the distance function $U^{\rm 2B}$ and $U^{\rm 3B}$ satisfy the following common conditions
\begin{equation}\label{condition:2-3-body}
0 \leq U^{k\rm B}(x)=U^{k\rm B}(-x) \quad \text{ with } \quad U^{k\rm B},\widehat{U^{k\rm B}} \in L^1(\mathbb{R}^{2}) \quad \text{ and } \quad \int_{\mathbb{R}^{2}}U^{k\rm B}(x) {\rm d}x = 1,
\end{equation}
where $k \in \{2,3\}$. In the literature \cite{NamRicTri-23,NguRic-24,NguRic-25}, the three-body repulsion was represented by a function, namely $0 \leq W \in L^{1}(\mathbb R^{4})$, of distance between three arbitrary variables with symmetry conditions, i.e.,
\begin{equation}\label{3-body-interaction}
W(x,y) = W(y,x) \quad \text{and} \quad W(x-y,x-z) = W(y-x,y-z) = W(z-x,z-y).
\end{equation}
However, due to technical issues arising in the mathematical studies of homogeneous Bose gases, we consider here the three-body interaction as the product of the same two-body interaction \eqref{condition:2-3-body} which preserves the properties of three-body interactions. This allows us to use a variant version of Onsager’s lemma in order to deal with the three-body interaction (see \Cref{lem:three-one} below) analogously to the two-body interaction. It is worth noting that such a consideration of three-body interactions is somehow similar to the circumradius of the Menger--Melnikov curvature in the study of anyon gases \cite{LunRou-15}.

In this article, our focus lies on the mean-field limit of the quantum ground state energy per particle, given by
\begin{equation}\label{energy:quantum}
E_{a,b,N}^{\mathrm{QM}} := \inf\left\{\left\langle\Psi_N \left|\frac{H_{a,b,N}}{N}\right| \Psi_N \right\rangle : \Psi_N \in L^{2}_{\rm sym}(\mathbb{R}^{2N}), \int_{\mathbb R^{2N}}|\Psi_N|^{2}=1 \right\}
\end{equation}
and on the associated ground states. In the many-body setting, the mean-field ansatz
\begin{equation}\label{eq:BEC}
\Psi_N(x_1,\ldots,x_N) \approx v^{\otimes N}(x_1,\ldots,x_N) := v(x_1)\ldots v(x_N) \quad \text{with} \quad \|v\|_{L^{2}}^{2}=1
\end{equation}
leads to the homogeneous Hartree functional
\begin{align}\label{functional:h}
\mathcal{E}^{\rm H}_{a, b,N}[v] :={} & \int_{\mathbb{R}^{2}}|\nabla v(x)|^{2} {\rm d}x - \frac{a}{2}\iint_{\mathbb{R}^{4}} U^{\rm 2B}_{N^\alpha}(x-y)|v(x)|^{2}|v(y)|^{2} {\rm d}x {\rm d}y \nonumber \\
& + \frac{b}{6}\iiint_{\mathbb{R}^6} U^{\rm 3B}_{N^\beta}(x-y)U^{\rm 3B}_{N^\beta}(x-z)|v(x)|^{2}|v(y)|^{2}|v(z)|^{2} {\rm d}x {\rm d}y {\rm d}z .
\end{align}
Its minimum provides an upper bound on the many-body ground state energy, i.e.,
\begin{equation}\label{energy:h}
E_{a,b,N}^{\mathrm{H}}[1] := \inf\left\{\mathcal{E}^{\rm H}_{a, b,N}[v]:v\in H^1(\mathbb{R}^{2}), \int_{\mathbb R^{2}}|v|^{2}=1\right\} \geq E_{a,b,N}^{\rm QM}.
\end{equation}
The Hartree functional \eqref{functional:h}, which is commonly used to describe the mean-field approximation in many-body systems, can be formally related to the NLS functional under certain conditions. In the large-$N$ limit, the rescaled potentials converge to delta distributions, and the Hartree functional reduces to the cubic-quintic nonlinear Schr\"odinger (NLS) functional
\begin{equation}\label{functional:nls}
\mathcal{E}^{\rm NLS}_{a,b}[v] := \int_{\mathbb{R}^{2}}\left[|\nabla v(x)|^{2} - \frac{a}{2} |v(x)|^{4} + \frac{b}{6} |v(x)|^6\right] {\rm d}x ,
\end{equation}
with the associated NLS ground state energy
\begin{equation}\label{energy:nls}
E^{\rm NLS}_{a,b}[1] := \inf\left\{\mathcal{E}^{\rm NLS}_{a,b}[v]:v\in H^1(\mathbb{R}^{2}), \int_{\mathbb{R}^{2}}|v(x)|^{2}=1\right\}.
\end{equation}

The main contribution of this work is to rigorously justify this hierarchy of models: from the many-body Hamiltonian \eqref{hamiltonian} to the Hartree functional \eqref{functional:h}, and further to the cubic-quintic NLS functional \eqref{functional:nls}. We establish asymptotics of the ground state energy in the large-$N$ limit and characterize the qualitative behavior of ground states across different parameter regimes. Our results extend previous works on the cubic NLS theory for inhomogeneous Bose gases \cite{Lewin-ICMP,LewNamRou-16,LewNamRou-17,NamRou-20,CheHol-17,LewNamRou-17-proc,GuoSei-14,Nguyen-20,GuoLuoYan-20,DinNguRou-24} (see also \cite{BraSacTolHul-95,BraSacHul-97,SacStoHul-98}), as well as studies of Bose systems with three-body interactions \cite{ChePav-11,Chen-12,Yuan-15,Xie-15,CheHol-19,NamSal-20,LiYao-21,NamRicTri-22a,NamRicTri-22b,NamRicTri-23,NguRic-24,NguRic-25} (see also \cite{EsrGreZhoLin-96,JosRic-97,GamFreTom-99,AkhDasVag-99,GamFreTomCho-00}). In particular, we demonstrate how the repulsive three-body term stabilizes the system and prevents collapse, leading to rich phenomena such as self-trapping \cite{Petrov-14}, supersolid phase \cite{BisBla-15,Blakie-16}, and phase transitions between cubic and cubic-quintic regimes.

In the absence of three-body interaction, it is well known that the NLS functional is stable if and only if the strength $a$ of the attractive two-body interaction is below the critical mass $a_{*}$, which is the optimal constant in the Gagliardo--Nirenberg inequality
\begin{equation}\label{ineq:gn}
\frac{a_{*}}{2} \int_{\mathbb R^{2}}|v(x)|^{4}{\rm d}x \leq \int_{\mathbb R^{2}}|v(x)|^{2}{\rm d}x \int_{\mathbb R^{2}}|\nabla v(x)|^{2}{\rm d}x.
\end{equation}
Moreover, it is well-known that \eqref{ineq:gn} has a positive radial symmetry optimizer $Q\in H^1(\mathbb{R}^{2})$, which is the unique optimizer up to translations, multiplication by a complex factor, and scaling. Such an optimizer solves the following cubic NLS equation in $\mathbb R^{2}$ (see \cite{Weinstein-83})
\begin{equation}\label{eq:nls}
-\Delta Q + Q - Q^{3} = 0.
\end{equation}
The critical value is then $a_{*} = \|Q\|_{L^{2}}^{2}$ where $Q$ is the unique (up to translations) positive radial solution of \eqref{eq:nls}. In order to formulate our results in this article, we define the $L^{2}(\mathbb R^{2})$-normalized unique solution of \eqref{eq:nls}, i.e.,
\begin{equation}\label{def:cubic-nls}
Q_0 := \|Q\|_{L^{2}}^{-1} Q.
\end{equation}
It then follows from \eqref{ineq:gn} and \eqref{eq:nls} that
\begin{equation}\label{norms_Q_0}
\int_{\mathbb R^{2}}|\nabla Q_{0}(x)|^{2}{\rm d}x = \frac{a_{*}}{2}\int_{\mathbb R^{2}}|Q_{0}(x)|^{4}{\rm d}x = \int_{\mathbb R^{2}}|Q_{0}(x)|^{2}{\rm d}x = 1.
\end{equation}

It is an interesting phenomenon that, in the limit of strong attractive two-body interaction, i.e., $a \to \infty$, the kinetic term in \eqref{functional:nls} is neglected, and the minimization problem \eqref{energy:nls} reduces to a local density approximation. This is the so-called ``Thomas--Fermi limit'' (TF) which appeared in the studies of confined BECs with repulsive two-body interactions \cite{LieSeiYng-00}. Indeed, it was proved in \cite{DoaNgu-26} that, in the Thomas--Fermi limit, we have
\begin{equation}\label{energy:TF}
E^{\rm NLS}_{a,b}[1] \approx E^{\rm TF}_{a,b}[1] = \inf\left\{\mathcal{E}^{\rm TF}_{a,b}[\varrho] : 0\leq \varrho \in L^{1} \cap L^{3}(\mathbb R^2), \int_{\mathbb R^2}\varrho(x){\rm d}x = 1\right\} = \frac{a^{2}}{b}E^{\rm TF}_{1,1}[1] < 0
\end{equation}
where
\begin{equation}\label{functional:tf}
\mathcal{E}^{\rm TF}_{a,b}[\varrho] = \int_{\mathbb R^2} \left[-\frac{a}{2}\varrho(x)^{2} + \frac{b}{6}\varrho(x)^{3}\right]{\rm d}x.
\end{equation}
Furthermore, the minimization problem $E^{\rm TF}_{a,b}[1]$ in \eqref{energy:TF} admits (up to translation) a unique ground state $\varrho^{\rm TF}_{a,b}$, which is positive radially symmetric decreasing and satisfies, by a simple scaling,
\begin{equation}\label{energy:TF-scaling}
\varrho^{\rm TF}_{a,b}(x) = \frac{a}{b}\varrho^{\rm TF}_{1,1}\left(\left(\frac{a}{b}\right)^{\frac{1}{2}}x\right).
\end{equation}
In the following, we summarize the results of \cite{DoaNgu-26} on the classification of existence and non-existence of homogeneous cubic-quintic NLS ground states as well as its asymptotic behaviors.

\begin{theorem}[Existence of NLS ground states]\label{thm:existence-nls}
Let $a>0$, $b\in\mathbb{R}$, and $E^{\rm NLS}_{a,b}[1]$ be given in \eqref{energy:nls}.
\begin{enumerate}[label=(\roman*)]
\item\label{nls-gs-inexistence} {\bf Non-existence of NLS ground states.} If either $b<0$ or $b=0$ and $a>a_{*}$ then $E^{\rm NLS}_{a,b}[1] = -\infty$. Furthermore, if $b \geq 0$ and $a \leq a_{*}$ then $E_{a, b}^{\rm NLS} = 0$ but there are no ground states, except the unique ground state $Q_{0}$ in \eqref{def:cubic-nls} of $E^{\rm NLS}_{a_{*},0}[1] = 0$.

\item\label{nls-gs-existence} {\bf Existence of NLS ground states.} If $b>0$ and $a > a_{*}$ then $E^{\rm NLS}_{a,b}[1] < 0$ and there are ground states, which are nonnegative radially symmetric decreasing.

\item {\bf Existence and uniqueness of TF ground state.} The minimization problem $E^{\rm TF}_{a,b}[1]$ given by \eqref{energy:TF} admits a (unique) ground state for every fixed $a,b>0$. Furthermore, $E^{\rm TF}_{a,b}[1] = -\dfrac{3a^{2}}{8b}$.
\end{enumerate}
\end{theorem}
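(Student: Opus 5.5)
The argument rests on the scaling $v_\lambda(x)=\lambda v(\lambda x)$, which preserves the $L^2$ norm and gives
\[
\mathcal{E}^{\rm NLS}_{a,b}[v_\lambda]=\lambda^{2}\Bigl(\|\nabla v\|_{L^2}^2-\tfrac a2\|v\|_{L^4}^4\Bigr)+\lambda^{4}\tfrac b6\|v\|_{L^6}^6 ,
\]
together with the Gagliardo--Nirenberg inequality \eqref{ineq:gn} and the properties \eqref{norms_Q_0} of $Q_0$.

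\emph{Part (i).} If $b<0$, take any normalized $v$ and let $\lambda\to\infty$; the $\lambda^4$ term dominates and the energy tends to $-\infty$. If $b=0$ and $a>a_*$, take $v=Q_0$. Then $\|\nabla Q_0\|^2-\frac a2\|Q_0\|_4^4=1-a/a_*<0$, and letting $\lambda\to\infty$ again gives $-\infty$.

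If $b\ge0$ and $a\le a_*$, then \eqref{ineq:gn} gives $\mathcal{E}^{\rm NLS}_{a,b}[v]\ge (1-a/a_*)\|\nabla v\|^2+\frac b6\|v\|_6^6\ge0$. For the matching upper bound, use the trial states $(Q_0)_\lambda$ with $\lambda\to0$; the energy is $O(\lambda^2)\to0$, so $E^{\rm NLS}_{a,b}[1]=0$.

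Now suppose a minimizer $v$ exists. If $b>0$, the inequality forces $\|v\|_6=0$, which contradicts $\|v\|_2=1$. If $b=0$ and $a<a_*$, it forces $\nabla v=0$, again impossible. If $b=0$ and $a=a_*$, then $v$ is an optimizer of \eqref{ineq:gn}. By the uniqueness of optimizers up to symmetries, $v$ is a translate, phase and $L^2$-preserving dilate of $Q_0$. Each such function has energy $0$, so these are exactly the ground states, and this is the exceptional case described in the statement.

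\emph{Part (ii).} Negativity follows from $(Q_0)_\lambda$ with $\lambda$ small, since the leading $\lambda^2$ coefficient $1-a/a_*$ is negative. For existence, I would first reduce to nonnegative radially symmetric decreasing functions. Replacing $v$ by $|v|^*$ leaves the $L^p$ norms unchanged and does not increase the kinetic energy, by the diamagnetic and P\'olya--Szeg\H{o} inequalities.

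Next I would get a priori bounds on a minimizing sequence. Combining \eqref{ineq:gn} with the interpolation $\|v\|_4^4\le\|v\|_2\|v\|_6^3$ gives $\frac a2\|v\|_4^4\le \frac b{12}\|v\|_6^6+C$. This bounds $\|\nabla v\|$ and $\|v\|_6$, so the sequence is bounded in $H^1$.

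Compactness of radial decreasing functions in $L^p(\mathbb R^2)$ for $2<p<\infty$ (Strauss) then yields a limit $v$ with strong convergence in $L^4$ and $L^6$, and weak lower semicontinuity handles the gradient term. The main obstacle is that the $L^2$ mass may be lost, so that $\|v\|_2=t\le1$ and possibly $t=0$. The case $t=0$ is excluded because $E<0$ forces $\liminf\|v_n\|_4>0$, hence $v\neq0$.

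To show $t=1$, I would use strict subadditivity from scaling the amplitude. For $\theta>1$ and $w$ with $\|w\|_2^2=m$ and negative energy, $\mathcal{E}[\sqrt\theta w]<\theta\,\mathcal{E}[w]$ holds when the quartic part dominates, and I would check this via the constraint $\frac a2\|w\|_4^4$ versus $\frac b6\|w\|_6^6$. Alternatively, rescale $w(x)\mapsto w(x/\mu)$, which multiplies the mass by $\mu^2$, keeps the kinetic energy and multiplies the potential part by $\mu^2$; this gives $E[m]=\inf\bigl(\|\nabla w\|^2+m\,\cdot\text{potential}\bigr)$. The function $m\mapsto E[m]/m$ is then strictly decreasing wherever $E<0$, which rules out $t<1$.

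\emph{Part (iii).} Minimizing pointwise gives the TF ground state. On its support the density equals the constant maximizing the gain per unit mass: minimize $-\frac a2\rho+\frac b6\rho^2$, which gives $\rho=\frac{3a}{2b}$. The support is then a ball of area $2b/(3a)$, and $E=\frac{-a/2\cdot 3a/(2b)+b/6\cdot 9a^2/(4b^2)}{1}=-\frac{3a^2}{8b}$. Uniqueness up to translation and null sets follows from strict convexity in the bathtub-type argument.
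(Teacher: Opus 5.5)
The paper does not prove this theorem itself (it is quoted from \cite{DoaNgu-26}), so your argument has to stand on its own. Parts (i) and (ii) are sound.

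In (ii), discard the first option, amplitude scaling. You have $\mathcal{E}^{\rm NLS}_{a,b}[\sqrt\theta w]-\theta\,\mathcal{E}^{\rm NLS}_{a,b}[w]=\theta(\theta-1)\bigl[(\theta+1)\frac b6\|w\|_{L^6}^6-\frac a2\|w\|_{L^4}^4\bigr]$. Negativity of $\mathcal{E}^{\rm NLS}_{a,b}[w]$ only gives $\frac a2\|w\|_{L^4}^4>\frac b6\|w\|_{L^6}^6$, not the factor $\theta+1$. The dilation option is the right one; it is the same device $\widetilde{v_\lambda}=v_\lambda(\|v_\lambda\|_{L^2}\cdot)$ that the paper uses in the proof of \Cref{thm:existence-h}. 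Radial compactness already gives strong $L^4$ and $L^6$ convergence, so you can apply the dilation directly to the weak limit. The limit $v$ has mass $t\in(0,1)$ and $\mathcal{E}^{\rm NLS}_{a,b}[v]\le E^{\rm NLS}_{a,b}[1]<0$, so its potential part is negative. Then $v(\sqrt t\,\cdot)$ has mass $1$, the same kinetic energy, and $t^{-1}$ times the potential energy. Its energy is therefore strictly below $E^{\rm NLS}_{a,b}[1]$, a contradiction. This avoids having to justify strict monotonicity of $E[m]/m$, which would need a lower bound on the kinetic energy of near-minimizers.

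The genuine gap is the uniqueness claim in (iii). Your pointwise bound $\mathcal{E}^{\rm TF}_{a,b}[\varrho]=\int\varrho\,\bigl(-\frac a2\varrho+\frac b6\varrho^2\bigr)\ge-\frac{3a^2}{8b}$ is correct and gives existence and the value. But its equality case is $\varrho=\frac{3a}{2b}\mathbbm{1}_S$ for an \emph{arbitrary} measurable set $S$ with $|S|=\frac{2b}{3a}$. Nothing forces $S$ to be a disc, so ``the support is then a ball'' does not follow.

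Nor is there strict convexity to invoke. The map $s\mapsto-\frac a2s^2+\frac b6s^3$ has second derivative $-a+bs<0$ on $(0,a/b)$. Moreover, because $\mathcal{E}^{\rm TF}_{a,b}$ is local, it is invariant under every measure-preserving rearrangement. A square, or two disjoint discs of total area $\frac{2b}{3a}$, give minimizers not related to the disc by translation, so uniqueness up to translation is false.

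What is true is uniqueness up to equimeasurable rearrangement, equivalently uniqueness among radially symmetric nonincreasing densities. There $S$ must be the centred disc of radius $\sqrt{2b/(3\pi a)}$ up to a null set. This is also what the paper actually relies on, which is why it passes to $v_N^*$ in \Cref{thm:behaviors-h}\ref{thm:behavior-h-tf}. Replace the convexity sentence by this characterization of all minimizers.
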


\begin{theorem}[Mass concentration of NLS ground states]\label{thm:behavior-nls}
Let $\{a_{n}\},\{b_{n}\} \subset (0, \infty)$ be such that $a_{n} \to a_{0} \in [a_{*},\infty]$ and $b_{n} \to b_{0} \in [0,\infty]$. Let $\{v_{n}\}$ be a sequence of ground states of $E^{\rm NLS}_{a_{n},b_{n}}[1]$ given by \eqref{energy:nls}. We have the following.

\begin{enumerate}[label=(\roman*)]
\item {\bf Cubic-quintic approximation.} If $a_{*} < a_{0} < \infty$ then, up to a translation and extracting a subsequence,
\begin{equation}\label{cv:gs-cubic-quintic-2d}
\lim_{n\to\infty} b_{n}^{\frac{1}{2}}v_{n}(b_{n}^{\frac{1}{2}}\cdot) = w_{0}
\end{equation}
strongly in $H^{1}(\mathbb R^{2})$, where $w_{0}$ is a ground state of $E^{\rm NLS}_{a_{0},1}[1]$. Furthermore,
\begin{equation}\label{cv:energy-cubic-quintic-2d}
\lim_{n\to\infty} b_{n}E^{\rm NLS}_{a_{n},b_{n}}[1] = E^{\rm NLS}_{a_{0},1}[1].
\end{equation}

\item {\bf Cubic approximation.} If $a_{0} = a_{*}$ then, up to a translation,
\begin{equation}\label{cv:gs-cubic-2d}
\lim_{n\to\infty} \ell_{n}^{\frac{1}{2}}v_{n}(\ell_{n}^{\frac{1}{2}} \cdot) = Q_0 \quad \text{with} \quad \ell_{n} = \frac{2\|Q_{0}\|_{L^{6}}^{6}b_{n}}{3\|Q_{0}\|_{L^{4}}^{4}(a_{n} - a_{*})}
\end{equation}
strongly in $H^1(\mathbb{R}^{2})$ for the whole sequence. Here $Q_{0}$ is the unique $L^{2}$-normalized optimizer for \eqref{ineq:gn}. Furthermore,
\begin{equation}\label{cv:energy-cubic-2d}
\lim_{n\to\infty}\frac{b_{n}}{(a_{n}-a_{*})^{2}}E^{\rm NLS}_{a_{n},b_{n}}[1] = -\frac{3}{8}\frac{\|Q_{0}\|_{L^{4}}^{8}}{\|Q_{0}\|_{L^{6}}^{6}}.
\end{equation}

\item {\bf Thomas--Fermi approximation.} If $a_{0} = \infty$ then, up to a translation,
\begin{equation}\label{cv:gs-h-tf}
\lim_{n\to\infty} \frac{b_{n}}{a_{n}}v_{n}\left(\left(\frac{b_{n}}{a_{n}}\right)^{\frac{1}{2}} \cdot\right)^{2} = \varrho^{\rm TF}_{1,1}
\end{equation}
strongly in $L^{1}\cap L^{3}(\mathbb{R}^{2})$ for the whole sequence. Furthermore,
\begin{equation}\label{cv:energy-h-tf}
\lim_{n\to\infty}\frac{b_{n}}{a_{n}^{2}}E^{\rm NLS}_{a_{n},b_{n}}[1] = E^{\rm TF}_{1,1}[1].
\end{equation}
\end{enumerate}
\end{theorem}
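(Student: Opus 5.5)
The plan is to reduce all three regimes to a single scaling identity and then argue regime by regime with compactness. For $v$ with $\|v\|_{L^2}=1$ and $\ell>0$, set $v^\ell(x)=\ell^{-1/2}v(\ell^{-1/2}x)$. Then
\begin{equation*}
\mathcal{E}^{\rm NLS}_{a,b}[v^\ell]=\ell^{-1}\int|\nabla v|^2-\frac{a}{2}\ell^{-1}\int|v|^4+\frac{b}{6}\ell^{-2}\int|v|^6 .
\end{equation*}
In particular $\ell\,\mathcal{E}^{\rm NLS}_{a,b}[v^\ell]=\mathcal{E}^{\rm NLS}_{a,b/\ell}[v]$.

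\textbf{Regime (i).} Take $\ell=b_n$. The rescaled functions $w_n:=b_n^{1/2}v_n(b_n^{1/2}\cdot)$ are ground states of $E^{\rm NLS}_{a_n,1}[1]$, and $b_nE^{\rm NLS}_{a_n,b_n}[1]=E^{\rm NLS}_{a_n,1}[1]$. The argument then has three steps.
\begin{enumerate}[label=(\alph*)]
\item Continuity of $a\mapsto E^{\rm NLS}_{a,1}[1]$ on $(a_*,\infty)$ follows from concavity in $a$ (an infimum of affine functions), since the energy is finite by \Cref{thm:existence-nls}. This gives \eqref{cv:energy-cubic-quintic-2d}.
\item For boundedness in $H^1$, write $\int|\nabla w_n|^2\le E+\frac{a_n}{2}\int|w_n|^4$. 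Interpolation gives $\|w\|_4^4\le\|w\|_2\|w\|_6^3$, so the quintic term controls the quartic one and $\int|w_n|^6$ stays bounded.
\item Since $w_n$ is radially decreasing, radial compactness (Strauss) upgrades this to strong $L^4$ and $L^6$ convergence. The limit is nontrivial because the energy is negative, and it has mass $1$: otherwise strict subadditivity $E[1]<E[\lambda]+E[1-\lambda]$ (obtained from the scaling $v\mapsto\sqrt{\theta}v$ and the negativity of the energy) would be violated. Lower semicontinuity then yields a minimizer and strong $H^1$ convergence.
\end{enumerate}

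\textbf{Regime (iii).} Set $\ell=b_n/a_n$ and $u_n=\ell^{1/2}v_n(\ell^{1/2}\cdot)$. The identity gives
\begin{equation*}
\frac{b_n}{a_n^2}E^{\rm NLS}_{a_n,b_n}[1]=\inf\Bigl\{\tfrac{1}{a_n}\textstyle\int|\nabla u|^2+\int\bigl(-\tfrac12|u|^4+\tfrac16|u|^6\bigr)\Bigr\}.
\end{equation*}
\begin{enumerate}[label=(\alph*)]
\item The lower bound $\ge E^{\rm TF}_{1,1}[1]$ is immediate by dropping the kinetic term.
\item For the upper bound, use a trial state $\sqrt{\rho_\epsilon}$, where $\rho_\epsilon$ is a smooth $H^1$ mollification of $\varrho^{\rm TF}_{1,1}$ with mass $1$. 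Since $1/a_n\to0$, the kinetic contribution vanishes.
\item This shows $\rho_n=|u_n|^2$ is a minimizing sequence for TF. Radial decreasing monotonicity gives compactness, and uniqueness of the TF minimizer gives convergence of the whole sequence in $L^1\cap L^3$.
\end{enumerate}

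\textbf{Regime (ii).} I expect this to be the main obstacle, since both cubic terms nearly cancel at $a_*$. Set $\epsilon_n=a_n-a_*$.
\begin{enumerate}[label=(\alph*)]
\item For the upper bound, use the trial state $Q_0^\ell$. By \eqref{norms_Q_0}, $\int|\nabla Q_0|^2=\frac{a_*}{2}\int|Q_0|^4=1$, so
\begin{equation*}
\mathcal{E}[Q_0^\ell]=-\frac{\epsilon_n}{2}\|Q_0\|_4^4\,\ell^{-1}+\frac{b_n}{6}\|Q_0\|_6^6\,\ell^{-2}.
\end{equation*}
Optimizing in $\ell$ gives exactly $\ell_n$ and the value $-\frac38\frac{\epsilon_n^2\|Q_0\|_4^8}{b_n\|Q_0\|_6^6}$.
\item For the lower bound, by \eqref{ineq:gn} the energy is at least $-\frac{\epsilon_n}{2}\int|v_n|^4+\frac{b_n}{6}\int|v_n|^6$. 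From the upper bound, $\int|\nabla v_n|^2-\frac{a_*}{2}\int|v_n|^4\to0$ after rescaling by $\ell_n$.
\item Rescale $\tilde v_n=\ell_n^{1/2}v_n(\ell_n^{1/2}\cdot)$. The energy bounds force $\|\nabla\tilde v_n\|_2$ to be bounded above and below, so $\tilde v_n$ is an optimizing sequence for \eqref{ineq:gn}.
\item By the concentration-compactness characterization of Gagliardo--Nirenberg optimizing sequences, together with radial symmetry, $\tilde v_n\to Q_0^{\lambda}$ strongly in $H^1$ for some $\lambda$.
\item Plug this into the lower bound $-\frac{\epsilon_n}{2}\lambda^{-1}\ell_n^{-1}\|Q_0\|_4^4+\frac{b_n}{6}\lambda^{-2}\ell_n^{-2}\|Q_0\|_6^6$ and compare with the upper bound. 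Since $\ell_n$ is the unique minimizer of this function of $\ell$, this forces $\lambda=1$, which gives both \eqref{cv:gs-cubic-2d} for the whole sequence and \eqref{cv:energy-cubic-2d}.
\end{enumerate}
The delicate point is showing $\|\nabla\tilde v_n\|_2$ stays of order one. I would first try to get this by exploiting $\|v\|_4^4\le\|v\|_6^3$ together with the sign of the energy.
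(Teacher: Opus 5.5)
Your exact-scaling reduction and your treatment of regime (ii) match what the paper does in its proof of the Hartree analogue, \Cref{thm:behaviors-h}; the NLS statement itself is only quoted from the literature. The point you flag as delicate is settled exactly as you suggest, see \eqref{cv:energy-cubic-2d-boundedness}--\eqref{cv:gs-cubic-2d-profile}. Your concavity argument for \eqref{cv:energy-cubic-quintic-2d} is also a valid shortcut, one the non-scaling Hartree problem does not allow. There are, however, two genuine gaps.

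\textbf{First gap: radial symmetry is assumed, not given.} You use throughout that $w_n$, $u_n$, $\tilde v_n$ are radially symmetric decreasing. This is not a hypothesis: $\{v_n\}$ is an arbitrary sequence of ground states, and \Cref{thm:existence-nls} only provides \emph{some} ground states of that form. In (i) and (ii) you can bypass this as the paper does. Translate so that the weak $H^1$ limit is nonzero: vanishing would force $w_n\to0$ in $L^4\cap L^6$, contradicting the negative energy in (i), resp.\ the lower bound on $\|\tilde v_n\|_{L^4}$ in (ii). Then split with Brezis--Lieb. In (iii) you cannot bypass it. The TF functional is local, so a TF minimizing sequence can be cut into pieces drifting apart at no cost, and the kinetic penalty $a_n^{-1}\int|\nabla u_n|^2$ disappears in the limit. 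This is exactly why the paper's Hartree result only controls the rearrangement $v_N^*$. To get convergence of $v_n^2$ itself, you must prove that every NLS ground state is a constant phase times a translate of its symmetric decreasing rearrangement. The route is as follows. First, $|v_n|^*$ has the same potential energy and no larger kinetic energy, so it is again a ground state, and equality holds both in the diamagnetic and in the P\'olya--Szeg\H{o} inequality. Second, the radial profile solves the Euler--Lagrange ODE, so $\nabla|v_n|^*$ vanishes only on a null set, and Brothers--Ziemer gives $|v_n|=|v_n|^*(\cdot-x_n)$. Third, strict positivity (maximum principle) together with equality in the diamagnetic inequality fixes a constant phase. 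The same point also underlies your use of radial compactness in (iii)(c).

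\textbf{Second gap: the scaling behind strict binding.} The amplitude scaling $v\mapsto\sqrt{\theta}v$ does not work here, because
$\mathcal{E}^{\rm NLS}_{a,b}[\sqrt{\theta}v]-\theta\,\mathcal{E}^{\rm NLS}_{a,b}[v]=(\theta^{2}-\theta)\bigl[-\frac{a}{2}\int|v|^{4}+\frac{b(\theta+1)}{6}\int|v|^{6}\bigr]$
has no sign, owing to the repulsive sextic term. What works is the $L^2$-critical dilation. For $\|v\|_{L^2}^2=\lambda$ set $\tilde v=v(\sqrt{\lambda}\,\cdot)$. Then $\|\tilde v\|_{L^2}=1$, $\|\nabla\tilde v\|_{L^2}=\|\nabla v\|_{L^2}$, and $\mathcal{E}[v]=\lambda\mathcal{E}[\tilde v]+(1-\lambda)\|\nabla\tilde v\|_{L^2}^2$. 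Combined with the lower bound $\|\nabla v\|_{L^2}^2\geq \frac{a_*}{a\lambda}|\mathcal{E}[v]|$ for negative-energy states (from \eqref{ineq:gn}), this gives $E[\lambda]>\lambda E[1]$ for $0<\lambda<1$. Strict binding $E[1]<E[\lambda]+E[1-\lambda]$ follows; this is \eqref{energy:h-split-lambda} in the paper. The same dilation, applied to the TF functional, is also what you need in (iii) to show the limit has mass one. Radial monotonicity only gives $\varrho(x)\leq(\pi|x|^2)^{-1}$, hence uniform tail control in $L^p$ for $p>1$ but not in $L^1$.
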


In this work, we endeavor to rigorously derive the homogeneous 2D cubic-quintic nonlinear Schrödinger (NLS) theory as the mean-field model of Bose gases. As observed in the effective cubic-quintic NLS theory, the system exhibits perpetual stability due to the repulsive three-body interaction. Notably, without additional external potentials, a system characterized by an attractive two-body interaction and a repulsive three-body interaction exhibits self-trapping behavior. In such scenarios, the existence of NLS ground states is contingent upon the two-body interaction being sufficiently negative, as elucidated in \Cref{thm:existence-nls}. Consequently, the system's behaviors is heavily influenced by the attractive two-body interaction, as demonstrated in \Cref{thm:behavior-nls}.

In light of \Cref{thm:existence-nls,thm:behavior-nls}, we anticipate analogous outcomes in the Hartree theory \eqref{energy:h}. However, the existence of Hartree ground states is not immediately apparent due to the absence of its desirable property of radial symmetry and decreasing potential. This deficiency can be rectified through the concentration compactness method in the calculus of variations \cite{Lions-84a}. Furthermore, additional assumptions regarding the two-body and three-body functions are necessary to establish mass concentrations. Our findings encompass the following results.

\begin{theorem}[Existence of Hartree ground states]\label{thm:existence-h}
Let $0<\alpha<\beta$ and assume that $U^{\rm 2B}_{N^{\alpha}}$, $U^{\rm 3B}_{N^{\beta}}$ satisfy \eqref{scaled-interaction}-\eqref{condition:2-3-body}. Let $a>a_{*}$ and $b>0$ be fixed. Then, the minimization problems \eqref{energy:h} admit ground states for $N$ sufficient large.
\end{theorem}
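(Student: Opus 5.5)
We will prove existence by the concentration-compactness method of Lions, reducing compactness of minimizing sequences to the strict binding inequality
\begin{equation*}
E^{\rm H}_{a,b,N}[\lambda] < E^{\rm H}_{a,b,N}[\mu] + E^{\rm H}_{a,b,N}[\lambda-\mu], \qquad 0<\mu<\lambda\le 1,
\end{equation*}
where $E^{\rm H}_{a,b,N}[\lambda]$ denotes the infimum of $\mathcal{E}^{\rm H}_{a,b,N}$ over $\|v\|_{L^2}^2=\lambda$. A prerequisite is the strict negativity
\begin{equation*}
E^{\rm H}_{a,b,N}[1]<0 \quad \text{for $N$ large.}
\end{equation*}

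\textbf{Step 1: negativity of the energy.} Since $a>a_*$ and $b>0$, \Cref{thm:existence-nls}\ref{nls-gs-existence} gives $E^{\rm NLS}_{a,b}[1]<0$. Take a fixed trial function $v\in H^1$, chosen so that $\mathcal{E}^{\rm NLS}_{a,b}[v]<0$; for instance an NLS ground state, or a smooth compactly supported approximation of one. Because $U^{\rm 2B}_{N^\alpha}$ and $U^{\rm 3B}_{N^\beta}$ are approximate identities (they have integral one), $\mathcal{E}^{\rm H}_{a,b,N}[v]\to\mathcal{E}^{\rm NLS}_{a,b}[v]$. Hence $E^{\rm H}_{a,b,N}[1]<0$ for $N$ large. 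The hypotheses $\widehat{U}\in L^1$ and $\alpha<\beta$ are not needed here, only later for uniformity if required.

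\textbf{Step 2: boundedness of minimizing sequences.} The three-body term is nonnegative, so the kinetic energy has to be controlled against the two-body term. For fixed $N$, Young's inequality gives
\begin{equation*}
\iint U^{\rm 2B}_{N^\alpha}(x-y)|v(x)|^2|v(y)|^2\,{\rm d}x\,{\rm d}y \le \|U^{\rm 2B}_{N^\alpha}\|_{L^\infty}\|v\|_{L^2}^4,
\end{equation*}
where $\|U^{\rm 2B}_{N^\alpha}\|_{L^\infty}\lesssim N^{2\alpha}\|\widehat{U^{\rm 2B}}\|_{L^1}$ is finite. So for each fixed $N$ the two-body term is bounded by a constant, and minimizing sequences are bounded in $H^1$. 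The real issue at fixed $N$ is therefore not blow-up but loss of mass at infinity, that is, vanishing or dichotomy.

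\textbf{Step 3: excluding vanishing and dichotomy.} \emph{Vanishing.} If $\sup_y\int_{B(y,R)}|v_n|^2\to0$, then $v_n\to0$ in $L^p$ for $2<p<\infty$ by Lions' lemma. The two-body term is bounded by $\|U^{\rm 2B}_{N^\alpha}\|_{L^1}\|v_n\|_{L^4}^4\to0$, which forces $\liminf\mathcal{E}^{\rm H}_{a,b,N}[v_n]\ge0$. This contradicts Step 1.

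\emph{Dichotomy.} If the sequence splits into two pieces with masses $\mu$ and $1-\mu$, moving apart, then the interaction cross terms vanish: $U^{\rm 2B}\in L^1$ together with standard localization arguments handles the quartic cross terms, and similarly the sextic ones. This yields $E[1]\ge E[\mu]+E[1-\mu]$, which contradicts the strict binding inequality.

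The binding inequality itself will follow from the scaling bound
\begin{equation*}
E^{\rm H}_{a,b,N}[\theta\lambda]<\theta E^{\rm H}_{a,b,N}[\lambda], \qquad \theta>1,
\end{equation*}
applied to any $\lambda$ with $E[\lambda]<0$. To get it, test $\sqrt\theta v$:
\begin{equation*}
\mathcal{E}[\sqrt\theta v]=\theta\,T-\theta^2 A+\theta^3 B,
\end{equation*}
with $T,A,B$ the kinetic, two-body and three-body parts of $\mathcal{E}[v]$. Because of the positive cubic term $\theta^3B$, this is not automatically below $\theta\mathcal{E}[v]$.

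\textbf{Main obstacle: the binding inequality.} The three-body repulsion breaks the pure superlinearity that the argument above relies on. I would first try the following. One may assume $E[\mu]<0$ for the relevant masses; if $E[\mu]=0$, the inequality $E[1]<E[1-\mu]$ is obtained separately from the monotonicity $\lambda\mapsto E[\lambda]$ being strictly decreasing where negative. Then apply the standard argument (as in Lions, or as in \cite{DoaNgu-26} for NLS) to the map $\theta\mapsto\mathcal{E}[\sqrt\theta v]/\theta$. This map is concave-type in $\theta$ near the region where it is negative, and the strict inequality follows from $\frac{d}{d\theta}\big(\mathcal{E}[\sqrt\theta v]/\theta\big)<0$ near minimizers of negative energy.

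An alternative route is to avoid binding altogether. One would compare with NLS uniformly in $N$: the translated minimizing sequences of $E^{\rm H}_{a,b,N}$ are close in energy to NLS ones with a uniform negative margin. One then uses that the NLS problem has the strict binding property for all masses above the critical mass threshold. This comparison is exactly where $N$ large and the scaling condition $\alpha<\beta$ would enter.

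Once dichotomy is excluded, compactness up to translations holds, i.e.\ strong convergence in $L^2$, and hence in $L^p$ for $p<\infty$. Weak lower semicontinuity of the kinetic term, together with continuity of the interaction terms, then gives a minimizer.
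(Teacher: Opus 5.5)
\textbf{The gap is the strict binding inequality for $E^{\rm H}_{a,b,N}$.} Your whole dichotomy step rests on it, and the argument you sketch for it does not work.

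In your notation, $\theta\mapsto\mathcal{E}[\sqrt\theta v]/\theta=T-\theta A+\theta^{2}B$ is a \emph{convex} quadratic, not a concave one. Its derivative at $\theta=1$ is $2B-A$, which has no sign a priori. Even when $A>2B$, you only get $E[\theta\lambda]<\theta E[\lambda]$ on the window $1<\theta<A/B-1$. Lions' argument, however, uses the scaling inequality at $\theta=1/\mu$ and $\theta=\mu/(1-\mu)$, which are not close to $1$. The same objection applies to your claimed strict monotonicity of $\lambda\mapsto E[\lambda]$ where it is negative.

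For the local NLS functional the sign can be recovered. One way is from the $L^2$-preserving dilations $tv(t\cdot)$, which give the virial identity $A=T+2B$ at a minimizer. The paper instead uses the rescaling $\widetilde v=v(\sqrt\lambda\,\cdot)$, which gives $E^{\rm NLS}_{a,b}[\lambda]\ge\lambda E^{\rm NLS}_{a,b}[1]+(1-\lambda)\|\nabla v_\lambda\|_{L^2}^2$, hence strict binding with an explicit gap. Neither tool is available for the Hartree functional at fixed $N$:
\begin{itemize}
\item Dilations change the interaction scales ($U^{\rm 2B}_{N^\alpha}\mapsto U^{\rm 2B}_{N^\alpha\sqrt\lambda}$, and similarly for $U^{\rm 3B}$), so you land on a different minimization problem.
\item A virial identity would produce sign-indefinite terms involving $x\cdot\nabla U^{\rm 2B}$, which is not even assumed to exist.
\item At this point you only have minimizing sequences, not minimizers.
\end{itemize}

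\textbf{What the paper actually does.} Your ``alternative route'' is not a fallback but the actual proof. The paper never proves binding for Hartree; it proves it for NLS and transfers it for $N$ large. Your sketch is missing the two ingredients that make this transfer work.

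First, you need an $H^1$ bound on minimizing sequences that is uniform in $N$. Your fixed-$N$ bound $\|\nabla v\|_{L^2}^2\lesssim N^{2\alpha}$ is correct but not enough. Under it, the comparison errors from \Cref{lem:cv-hartree-potentials}, e.g.\ $N^{-\alpha}\|v\|_{L^6}^3\|\nabla v\|_{L^2}$, need not vanish. This is where $\alpha<\beta$ enters. Suppose $L=\|\nabla v\|_{L^2}\to\infty$, which is only possible if $N\to\infty$ since $L\lesssim N^\alpha$. Set $\widetilde v=L^{-1}v(L^{-1}\cdot)$.
\begin{itemize}
\item Dividing the energy by $L^2$ gives $\|\widetilde v\|_{L^4}^4\ge 2/a-o(1)$.
\item Dividing by $L^4$ shows that the three-body term, now at scale $N^\beta L^{-1}\gtrsim N^{\beta-\alpha}\to\infty$, tends to zero. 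Hence $\|\widetilde v\|_{L^6}\to0$.
\item By interpolation this contradicts the $L^4$ lower bound.
\end{itemize}

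Second, with this uniform bound, Brezis--Lieb splitting of the \emph{NLS} energy of the minimizing sequence gives $E^{\rm NLS}_{a,b}[1]+o(1)_{N\to\infty}\ge E^{\rm H}_{a,b,N}[1]\ge E^{\rm NLS}_{a,b}[\lambda]+E^{\rm NLS}_{a,b}[1-\lambda]-o(1)_{N\to\infty}$. This is then contradicted using the quantitative NLS gap above.

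Your Step 1 and your exclusion of vanishing are fine and agree with the paper.
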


\begin{theorem}[Mass concentration of Hartree ground states]\label{thm:behaviors-h}
Let $0<\alpha<\beta$ and assume that $U^{\rm 2B}_{N^{\alpha}}$, $U^{\rm 3B}_{N^{\beta}}$ satisfy \eqref{scaled-interaction}-\eqref{condition:2-3-body}. Let $\{a_{N}\} \subset (a_{*},\infty)$, $\{b_{N}\} \subset (0, \infty)$ be such that $a_{N} \to a_{0} \in [a_{*},\infty]$ and $b_{N} \to b_{0} \in [0,\infty]$, as $N\to\infty$. Let $\{v_{N}\}$ be a sequence of ground states of $E^{\rm H}_{a_{N},b_{N},N}[1]$ given by \eqref{energy:h}, for $N$ sufficient large. We have the following.
\begin{enumerate}[label=(\roman*)]
\item\label{thm:behavior-h-cubic-quintic} {\bf Cubic-quintic approximation.} Assume that $a_{*} < a_{0} < \infty$ and that $b_{N}^{-\frac{1}{2}}N^{-\alpha} \ll 1$. Then, up to a translation and extracting a subsequence,
\begin{equation}\label{cv:gs-h-cubic-quintic}
\lim_{N\to\infty} b_{N}^{\frac{1}{2}}v_{N}(b_{N}^{\frac{1}{2}}\cdot) = w_{0}
\end{equation}
strongly in $H^{1}(\mathbb R^{2})$, where $w_{0}$ is a ground state of $E^{\rm NLS}_{a_{0},1}[1]$. Furthermore,
\begin{equation}\label{cv:energy-h-cubic-quintic}
\lim_{N\to\infty} b_{N}E^{\rm H}_{a_{N},b_{N},N}[1] = E^{\rm NLS}_{a_{0},1}[1].
\end{equation}

\item\label{thm:behavior-h-cubic} {\bf Cubic approximation.} Assume that $a_{0} = a_{*}$ and that $N^{-\alpha}(a_{N}-a_{*})^{-\frac{1}{2}}b_{N}^{-\frac{1}{2}} \ll 1$ and $xU^{\rm 2B} \in L^{1}(\mathbb R^{2})$. Then, up to a translation,
\begin{equation}\label{cv:gs-h-cubic}
\lim_{N\to\infty} \ell_{N}^{\frac{1}{2}} v_{N}(\ell_{N}^{\frac{1}{2}} \cdot) = Q_0 \quad \text{with} \quad \ell_{N} = \frac{2\|Q_{0}\|_{L^{6}}^{6}b_{N}}{3\|Q_{0}\|_{L^{4}}^{4}(a_{N} - a_{*})}
\end{equation}
strongly in $H^1(\mathbb{R}^{2})$ for the whole sequence. Here $Q_{0}$ is the unique $L^{2}$-normalized optimizer for \eqref{ineq:gn}. Furthermore,
\begin{equation}\label{cv:energy-h-cubic}
\lim_{N\to\infty}\frac{b_{N}}{(a_{N}-a_{*})^{2}}E^{\rm H}_{a_{N},b_{N},N}[1] = -\frac{3\|Q_{0}\|_{L^{4}}^{8}}{8\|Q_{0}\|_{L^{6}}^{6}}.
\end{equation}

\item\label{thm:behavior-h-tf} {\bf Thomas--Fermi approximation.} Assume that $a_{0} = \infty$ and that $\dfrac{a_{N}^{2}}{b_{N}} \ll N^{2\alpha}$ and $\dfrac{b_{N}^{2}}{a_{N}^{2}}N^{5\alpha-\beta} \ll 1$. Then, up to a translation,
\begin{equation}\label{cv:gs-h-tf}
\lim_{N\to\infty} \frac{b_{N}}{a_{N}} v_{N}^{*}\left(\left(\frac{b_{N}}{a_{N}}\right)^{\frac{1}{2}} \cdot\right)^{2} = \varrho^{\rm TF}_{1,1}
\end{equation}
strongly in $L^{1}\cap L^{3}(\mathbb{R}^{2})$, where $v_{N}^{*}$ is the symmetric decreasing rearrangement of $v_{N}$. Furthermore,
\begin{equation}\label{cv:energy-h-tf}
\lim_{N\to\infty}\frac{b_{N}}{a_{N}^{2}}E^{\rm H}_{a_{N},b_{N},N}[1] = E^{\rm TF}_{1,1}[1].
\end{equation}
\end{enumerate}
\end{theorem}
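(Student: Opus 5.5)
The plan is to compare the Hartree problem with the NLS problem in each regime through two-sided energy bounds, and then use the NLS results of \Cref{thm:behavior-nls} together with the concentration-compactness argument behind \Cref{thm:existence-h} to obtain convergence of ground states.

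\textbf{Upper bound.} Take the NLS ground state at the relevant scale as a trial state in $\mathcal{E}^{\rm H}_{a_N,b_N,N}$:
\begin{itemize}
\item in regime \ref{thm:behavior-h-cubic-quintic}, $b_N^{-1/2}w_0(b_N^{-1/2}\cdot)$;
\item in regime \ref{thm:behavior-h-cubic}, $\ell_N^{-1/2}Q_0(\ell_N^{-1/2}\cdot)$;
\item in regime \ref{thm:behavior-h-tf}, a smoothed square root of $\frac{a_N}{b_N}\varrho^{\rm TF}_{1,1}((a_N/b_N)^{1/2}\cdot)$ with vanishing $H^1$ cost.
\end{itemize}
After rescaling $x\mapsto L^{1/2}x$ with $L\in\{b_N,\ell_N,b_N/a_N\}$, the interactions become $U^{\rm 2B}$ at scale $N^{-\alpha}L^{-1/2}$ and $U^{\rm 3B}$ at scale $N^{-\beta}L^{-1/2}$. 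We then need to compare $\iint U_\epsilon(x-y)|u(x)|^2|u(y)|^2$ with $\int|u|^4$, and similarly for the three-body term. For the two-body term, Fourier/Plancherel and the bound $|1-\widehat{U}(\epsilon k)|\le C\epsilon|k|$ (available when $xU^{\rm 2B}\in L^1$) give an error $O(\epsilon)\|u\|_{H^1}^{4}$. For the three-body term the error is only $o(1)$, by dominated convergence for fixed smooth $u$.

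The hypotheses are what make these errors small. Each assumption $b_N^{-1/2}N^{-\alpha}\ll1$, $N^{-\alpha}(a_N-a_*)^{-1/2}b_N^{-1/2}\ll1$, and $a_N^2/b_N\ll N^{2\alpha}$ says exactly that $N^{-\alpha}L^{-1/2}\to0$. In regime \ref{thm:behavior-h-cubic} the error must be $o((a_N-a_*)^2/b_N)$ after multiplying back. This requires the first-order bound from $xU^{\rm 2B}\in L^1$ rather than mere convergence, since the leading terms $\int|\nabla u|^2-\frac{a_*}{2}\int|u|^4$ cancel at $Q_0$.

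\textbf{Lower bound.} The two-body attraction must be controlled from above by the local quartic term, and this is the main obstacle. Since $U^{\rm 2B}\ge0$, Young's inequality gives
\[
\iint U_\epsilon(x-y)\rho(x)\rho(y)\le\int(U_\epsilon*\rho)\rho\le\Big(\int\rho^2\Big)^{1/2}\|U_\epsilon*\rho\|_{L^2}\le\int\rho^2 .
\]
Hence $\mathcal{E}^{\rm H}\ge\int|\nabla v|^2-\frac a2\int|v|^4+\frac b6(\text{3-body})$.

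The three-body term requires a lower bound by $\frac b6\int|v|^6$ minus an error. I would write it as $\int|v|^2\,(U^{\rm 3B}_{N^\beta}*|v|^2)^2$ and compare it with $\int |v|^2 (U^{\rm 3B}_{N^\beta}*|v|^2)^{2}\ge \int \big(U^{\rm 3B}_{N^\beta}*|v|^2\big)^{3}-\text{error}$. By Jensen, $\int (U*\rho)^3\le\int\rho^3$, but that points the wrong way. So I would instead use a smeared density: set $\tilde\rho=U^{\rm 2B}_{N^\alpha}*\rho$ and work in terms of $\tilde\rho$.

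A cleaner route is to apply the cubic-quintic inequality to $\tilde\rho^{1/2}$. Using the convexity bound $|\nabla\sqrt{U*\rho}|^2\le U*|\nabla\sqrt\rho|^2$, the Hartree two-body term $\int\tilde\rho\rho\ge$ is controlled and the kinetic energy only decreases. The three-body mismatch between scales $N^{-\alpha}$ and $N^{-\beta}$, with $\beta>\alpha$, is then estimated by $\|\nabla v\|$ and $\|v\|_{L^\infty}$-type bounds on the ground state. This produces the extra condition $\frac{b_N^2}{a_N^2}N^{5\alpha-\beta}\ll1$ in regime \ref{thm:behavior-h-tf}.

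\textbf{Convergence of states.} After rescaling, the lower and upper bounds show that $u_N$ is a minimizing sequence for the limiting NLS (resp.\ GN or TF) problem and is bounded in $H^1$, using the a priori kinetic bound derived from the energy. Lions' concentration compactness then applies:
\begin{itemize}
\item \emph{Vanishing} is excluded because the limit energy is negative; in the cubic case one uses the normalization of the quartic term instead.
\item \emph{Dichotomy} is excluded by the strict binding inequality for the NLS energy.
\end{itemize}
So, up to translation, $u_N\to w_0$ strongly, with $w_0$ an NLS ground state by \Cref{thm:behavior-nls}. In the cubic case, uniqueness of $Q_0$ gives convergence of the whole sequence. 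In the TF case one passes to the rearrangements $v_N^*$ (the energy does not increase under rearrangement, by Riesz's inequality), and uniqueness of $\varrho^{\rm TF}_{1,1}$ yields convergence in $L^1\cap L^3$.
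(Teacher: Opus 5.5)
Your upper bounds and the concentration-compactness skeleton match the paper: the trial states, Young's inequality for the attractive term, vanishing excluded by negativity, dichotomy by strict binding, and uniqueness of $Q_0$ and $\varrho^{\rm TF}_{1,1}$. There is a genuine gap, however: the lower bound for the repulsive three-body term is missing, and that is the step that drives the whole proof.

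The ``smeared density'' route does not supply it. With $\tilde\rho=U^{\rm 2B}_{N^\alpha}*\rho$ you would still need $\int\rho\,(U^{\rm 3B}_{N^\beta}*\rho)^2\ge\int\tilde\rho^{3}-\text{error}$. Since $\int\tilde\rho^3\le\int\rho^3$, this is no easier than the original comparison. Likewise, $\int\rho\,\tilde\rho\le\int\tilde\rho^2+\text{error}$ needs the same kind of regularity estimate. The ``$L^\infty$-type bounds'' explanation of $\frac{b_N^2}{a_N^2}N^{5\alpha-\beta}\ll1$ is also not an argument.

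What is needed is the direct estimate of \Cref{lem:cv-hartree-potentials}. Write $\int|v|^6$ minus the three-body term as
$\iint|v(x)|^2\big[|v(x)|^2+U^{\rm 3B}_{N^\beta}*|v|^2(x)\big]\big[|v(x)|^2-|v(x-N^{-\beta}y)|^2\big]U^{\rm 3B}(y)\,{\rm d}x\,{\rm d}y$.
Split into $|y|\le L$ and $|y|>L$, and use $\big||v(x)|^2-|v(x-h)|^2\big|\le2|h|\int_0^1|v||\nabla v|(x-th)\,{\rm d}t$ together with H\"older and Young. This gives, for every normalized $v\in H^1$ and with no moment condition on $U^{\rm 3B}$,
\begin{equation*}
\iiint_{\mathbb R^6}U^{\rm 3B}_{N^\beta}(x-y)U^{\rm 3B}_{N^\beta}(x-z)|v(x)|^2|v(y)|^2|v(z)|^2\ge\Big(1-4\int_{|y|>L}U^{\rm 3B}\Big)\int_{\mathbb R^2}|v|^6-4N^{-\beta}L\|v\|_{L^{10}}^5\|\nabla v\|_{L^2}.
\end{equation*}

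This estimate is needed in two places you do not cover.
\begin{enumerate}
\item \emph{Regimes (i) and (ii).} The energy alone only gives $\|\nabla v_N\|_{L^2}\lesssim N^{\alpha}$, from $U^{\rm 2B}\in L^\infty$. After rescaling this is not an $O(1)$ bound on $\|\nabla w_N\|_{L^2}$, because $N^{\alpha}b_N^{1/2}\to\infty$ (resp.\ $N^\alpha\ell_N^{1/2}\to\infty$). Uniform $H^1$-boundedness instead comes from a blow-up argument. Suppose $\lambda_N=\|\nabla w_N\|_{L^2}\to\infty$ and rescale by $\lambda_N$. The effective three-body range becomes $\lesssim N^{\alpha-\beta}\to0$; this is exactly where $\beta>\alpha$ enters. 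The estimate then forces $\|\widetilde w_N\|_{L^6}\to0$. But kinetic versus quartic forces $\|\widetilde w_N\|_{L^4}^4\ge 2/a_N-o(1)$, a contradiction. Only after this does the estimate yield $b_N\mathcal E^{\rm H}_{a_N,b_N,N}[v_N]\ge\mathcal E^{\rm NLS}_{a_N,1}[w_N]-o(1)$, which your compactness step requires.
\item \emph{Regime (iii).} The paper inserts $\|v_N\|_{L^{10}}^5\|\nabla v_N\|_{L^2}\lesssim\|\nabla v_N\|_{L^2}^5\lesssim N^{5\alpha}$. The resulting error $b_NN^{5\alpha-\beta}L$ is $o(a_N^2/b_N)$ precisely when $\frac{b_N^2}{a_N^2}N^{5\alpha-\beta}\ll1$, which is where that hypothesis comes from.
\end{enumerate}

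A second, smaller problem is in regime (iii). Your claim that the Hartree energy does not increase under symmetric decreasing rearrangement ``by Riesz's inequality'' is not justified. Riesz's inequality produces $(U^{\rm 2B})^*$, and $U^{\rm 2B}$ is not assumed symmetric decreasing. For the repulsive three-body term, Brascamp--Lieb--Luttinger goes the wrong way: it tends to make the repulsion larger.

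The paper rearranges only after the estimate above has reduced the energy, up to $o(a_N^2/b_N)$, to the local expression $\frac{a_N^2}{b_N}\big[-\frac12\int\varrho_N^2+\frac{1-o(1)}{6}\int\varrho_N^3\big]$, where $\varrho_N=\frac{b_N}{a_N}|v_N|^2\big((b_N/a_N)^{1/2}\cdot\big)$. This expression is invariant under $\varrho_N\mapsto\varrho_N^*$. Compactness then follows from radial monotonicity and the $L^1\cap L^3$ bound, and the energy upper bound forces $\int\varrho_0=1$ and convergence of the $L^3$ norms. So this step can be repaired, but only once the missing three-body lower bound is in place.
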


\begin{remark}
\begin{itemize}
\item At fixed $a > a_{*}$ and $b > 0$, the convergence of $E^{\rm H}_{a,b,N}[1]$ to $E^{\rm NLS}_{a,b}[1]$ as well as of ground states, as $N \to \infty$, was covered by \eqref{cv:energy-h-cubic-quintic} and \eqref{cv:gs-h-cubic-quintic}. The condition $b^{-\frac{1}{2}}N^{-\alpha} \ll 1$ is then trivial.
\item In the TF limit, the convergence of Hartree ground states reduces to the convergence of its density as an ``approximate'' TF ground state. The latter is challenging to obtain due to the locality of the TF functional. Utilizing the desirable property of the radially symmetric decreasing function, we derive the convergence \eqref{cv:gs-h-tf} of the symmetric decreasing rearrangement of Hartree ground states instead of its original. It remains an open problem to obtain the convergence of original Hartree ground states.
\end{itemize}
\end{remark}

Following the observation of the physical phenomenon in the intermediate Hartree theory, we extend the results to the full many-body theory \eqref{hamiltonian}. In the studies of inhomogeneous Bose gases, the mathematical method employed is the quantum de Finetti theorem \cite{Rougerie-EMS}, as used in \cite{LewNamRou-16,LewNamRou-17,NamRou-20,Rougerie-20,NguRic-24,NguRic-25}. This approach involves localizing existing quantum ground states into finite-dimensional spaces using orthogonal projection of the one-body operator. However, this process necessitates a “nice” external potential. Consequently, the method of quantum de Finetti cannot be applied to the study of homogeneous BECs. For such homogeneous problems, the method of Onsager lemma is utilized, as exemplified in the rotating BECs study presented in \cite{LewNamRou-17-proc,DinNguRou-24}. While this method is effective, it incurs a trade-off of a slower convergence speed.

We shall demonstrate that the convergence of the quantum energy and the Hartree energy in the low-density regimes, specifically for $0<\alpha<\frac{1}{2}$ and $0<\beta<\frac{1}{4}$, is achieved. The condensation of the energy for higher values of $\alpha$ and $\beta$ requires advanced techniques. When comparing the homogeneous Hartree energy with the effective homogeneous cubic-quintic NLS energy, it is necessary that $\alpha<\beta$, as we have seen in \Cref{thm:existence-h,thm:behaviors-h}. This condition arises because the attractive two-body interaction is compensated by the three-body interaction in homogeneous Bose gases. This compensation is due to the fact that, for homogeneous Bose gases, the attractive two-body interaction must be sufficiently negative, i.e., $a>a_{*}$, and cannot be controlled by the kinetic term. 

It is noteworthy that a genuine many-body ground state for \eqref{hamiltonian} does not exist, primarily due to the translation invariance and the linearity of the many-body system. This phenomenon is encountered in various other models, such as \cite{DinNguRou-24,LieYau-87}. Although ``approximate'' ground states could be considered, their condensation is not anticipated due to their superposition within the many-body theory. A possibility proposed in \cite{LieYau-87} involves confining the translation-invariant system within the bounded domain of the limiting profile. Notably, this scenario aligns with our TF theory \eqref{functional:tf}. Subsequently, a Feynman--Hellman argument could be employed to derive the mass concentration of “localized” approximate many-body ground states. However, this necessitates the convergence of ``approximate'' TF ground states, which is not immediately apparent due to the locality of the TF functional. Despite the absence of a condensate in the translation-invariant setting, the many-body ground state energy still converges to the corresponding effective mean-field model. Consequently, we have the following observations.

\begin{theorem}\label{thm:qm}
Under the same assumptions as in \Cref{thm:behaviors-h} and the additional assumptions that $\widehat{U^{\rm 3B}} \geq 0$ and $a_{N}N^{2\alpha-1} + b_{N}N^{4\beta-1} \ll \ell_{N}$, where
$$
\ell_{N} = 
\begin{cases}
\dfrac{1}{b_{N}} & \text{if } a_{*}<a_{0}<\infty, \\
\dfrac{(a_{N}-a_{*})^{2}}{b_{N}} & \text{if } a_{0} = a_{*}, \\
\dfrac{a_{N}^{2}}{b_{N}} & \text{if } a_{0} = \infty,
\end{cases}
$$
we have the convergence of the quantum ground state energy
$$
\lim_{N\to\infty}\ell_{N}^{-1}E_{a_{N},b_{N},N}^{\mathrm{QM}} = 
\begin{cases}
E^{\rm NLS}_{a_{0},1}[1] & \text{if } a_{*}<a_{0}<\infty, \\
-\dfrac{3}{8}\dfrac{\|Q_{0}\|_{L^{4}}^{8}}{\|Q_{0}\|_{L^{6}}^{6}} & \text{if } a_{0} = a_{*}, \\
E^{\rm TF}_{1,1}[1] & \text{if } a_{0} = \infty.
\end{cases}
$$
\end{theorem}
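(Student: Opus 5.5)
The plan is to squeeze $E^{\rm QM}_{a_N,b_N,N}$ between the Hartree energy and a slightly perturbed Hartree energy. Then \Cref{thm:behaviors-h}, through \eqref{cv:energy-h-cubic-quintic}, \eqref{cv:energy-h-cubic} and \eqref{cv:energy-h-tf}, identifies the limit of $\ell_N^{-1}E^{\rm H}_{a_N,b_N,N}[1]$ in each of the three regimes.

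\textbf{Upper bound.} Inserting the trial state $v_N^{\otimes N}$, with $v_N$ a Hartree ground state, into \eqref{energy:quantum} gives \eqref{energy:h}, namely $E^{\rm QM}_{a_N,b_N,N}\le E^{\rm H}_{a_N,b_N,N}[1]$. Dividing by $\ell_N$ and invoking \Cref{thm:behaviors-h} then yields the $\limsup$ bound.

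\textbf{Lower bound.} For this I would use an Onsager-type argument on each interaction separately.

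For the attractive two-body term I would write $U^{\rm 2B}_{N^\alpha}$ through its Fourier transform. This requires positive-definiteness, which is not assumed for $U^{\rm 2B}$, so I would handle it as in \cite{LewNamRou-17-proc}. With $\rho_\Psi$ the one-body density, the bound to establish is
$$
-\sum_{i<j}U^{\rm 2B}_{N^\alpha}(x_i-x_j)\ \ge\ -N\sum_i (U^{\rm 2B}_{N^\alpha}*\rho)(x_i)+\frac{N^2}{2}D(\rho,\rho)-C N\,\|U^{\rm 2B}_{N^\alpha}\|_\infty ,
$$
with the attraction controlled by completing a square. The error $N\|U_{N^\alpha}\|_\infty\sim N^{1+2\alpha}$ becomes $a_N N^{2\alpha-1}$ after multiplying by $a_N/(N(N-1))$. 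If a sign is genuinely needed, one can alternatively bound $-\sum U$ by a one-body operator: by the Hoffmann-Ostenhof/Lieb-type estimate, the effective one-body Hamiltonian $-\Delta - a(U_{N^\alpha}*\rho)$ is then compared with the Hartree functional.

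For the three-body term, the variant Onsager lemma (\Cref{lem:three-one}) uses $\widehat{U^{\rm 3B}}\ge0$. Writing $\sum_{j\neq i}U^{\rm 3B}_{N^\beta}(x_i-x_j)$ as a positive quadratic form, it gives
$$
\sum_{i<j<k}U(x_i-x_j)U(x_i-x_k)\ \ge\ \sum_i \bigl[\text{Hartree cubic term in }\rho\bigr]-CN^2\|U^{\rm 3B}_{N^\beta}\|_\infty ,
$$
with error $\sim N^{2+4\beta}$, hence $b_NN^{4\beta-1}$ per particle after the normalization $b/((N-1)(N-2))$.

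\textbf{Assembling.} These bounds reduce $\langle\Psi_N,H_{a_N,b_N,N}\Psi_N\rangle/N$ to a one-body quantity: the kinetic energy of $\Psi_N$, which dominates $\int|\nabla\sqrt{\rho/N}|^2$ by the Hoffmann-Ostenhof inequality, plus Hartree-type interaction terms evaluated on $\sqrt{\rho_\Psi/N}$. Hence
$$
E^{\rm QM}_{a_N,b_N,N}\ \ge\ E^{\rm H}_{a_N,b_N,N}[1]-C\bigl(a_NN^{2\alpha-1}+b_NN^{4\beta-1}\bigr),
$$
possibly with slightly modified constants $a_N(1+o(1))$. The hypothesis $a_NN^{2\alpha-1}+b_NN^{4\beta-1}\ll\ell_N$ makes the error $o(\ell_N)$.

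\textbf{Main obstacle.} The delicate point is that each Onsager lemma produces, besides the needed error, a quadratic or cubic expression in $\rho$ that must be exactly the Hartree term. In the two-body attractive case the square-completion goes the wrong way, so the naive bound only gives an upper estimate. I would first try to lower-bound the attraction using Cauchy--Schwarz on $\widehat{U}$ together with the full kinetic energy. If perturbing $a_N$ to $a_N+\epsilon_N$ is needed, I would note that \Cref{thm:behaviors-h} is stable under such $o(1)$ changes (in regime (ii), $o(a_N-a_*)$).
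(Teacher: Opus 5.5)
Your upper bound is the paper's. The lower bound, however, has two genuine gaps.

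\textbf{Three-body term.} \Cref{lem:three-one} does not return the Hartree cubic term $\int\rho\,(U^{\rm 3B}_{N^\beta}*\rho)^2$. After the first completed square $\sum_i (f*U^{\rm 3B})(x_i)^2\ge 0$, the cross term $2\sum_{i,j}U^{\rm 3B}(x_i-x_j)(U^{\rm 3B}*\chi)(x_i)$ is still two-body, and its weight depends on one index only. To reduce it to one-body quantities, the paper symmetrizes and uses $\frac{s+t}{2}\ge\sqrt{st}$. This yields the positive-definite weighted form $\sum_{i,j}\sqrt{U^{\rm 3B}*\chi}(x_i)\,U^{\rm 3B}(x_i-x_j)\sqrt{U^{\rm 3B}*\chi}(x_j)$, to which Onsager is applied a second time, with $\chi=N\rho$. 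The surviving one-body term is $\iint\rho\sqrt{U^{\rm 3B}_{N^\beta}*\rho}\,(x)\,U^{\rm 3B}_{N^\beta}(x-y)\,\rho\sqrt{U^{\rm 3B}_{N^\beta}*\rho}\,(y)$. By \eqref{cv:h-nls-3-body-positive-modified} this is smaller than the Hartree term. Since the term is repulsive, you only get a lower bound by the modified energy $E^{\rm mH}_{a,b,N}[1]\le E^{\rm H}_{a,b,N}[1]$ of \eqref{energy:h-modified}, not by $E^{\rm H}$. So \Cref{thm:behaviors-h} cannot simply be quoted. The paper's extra step is to show that $\ell_N^{-1}E^{\rm mH}_{a_N,b_N,N}[1]$ has the same three limits. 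This is what \eqref{cv:h-nls-3-body-modified} and \eqref{cv-rate:h-nls-3-body-modified} are for; their proof uses \Cref{lem:kinetic} to control $\nabla\sqrt{U^{\rm 3B}_{N^\beta}*|v|^2}$. They are combined with a rerun of the compactness and rate arguments of \Cref{thm:behaviors-h} for $\mathcal{E}^{\rm mH}$. Your proposal has no counterpart of this step. Allowing $a_N(1+o(1))$ does not cover it, because what changes is the functional, not a coupling constant.

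\textbf{Attractive two-body term.} Your displayed inequality is an upper bound on $\sum_{i<j}U^{\rm 2B}_{N^\alpha}(x_i-x_j)$, and it is false: cluster the $x_i$ while $\rho$ is spread out. The alternatives you mention are not carried out. More fundamentally, the ``Assembling'' step cannot work here: it bounds $\langle\Psi_N,H_{a_N,b_N,N}\Psi_N\rangle/N$ from below, for every $\Psi_N$, by a Hartree-type functional of $\sqrt{\rho_{\Psi_N}}$ up to $o(\ell_N)$.

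To see why, separate the centre of mass $X$. Then $H_{a,b,N}=-N^{-1}\Delta_X\otimes 1+1\otimes H_{\rm rel}$, with permutations acting only on the relative variables. Take $\Psi=\psi_R(X)\Phi_{\rm rel}$, with $\Phi_{\rm rel}$ symmetric and nearly optimal and $\psi_R=R^{-1}\psi(\cdot/R)$. Its energy per particle is within $o(1)_{R\to\infty}$ of $E^{\rm QM}_{a,b,N}$. At the same time $\rho_\Psi=|\psi_R|^2*\mu$ for a probability measure $\mu$, so $\|\rho_\Psi\|_{L^\infty}\to0$. Both $\mathcal{E}^{\rm H}_{a,b,N}[\sqrt\rho]$ and $\mathcal{E}^{\rm mH}_{a,b,N}[\sqrt\rho]$ are $\ge-\frac a2\|\rho\|_{L^\infty}$, since $U^{\rm 2B}\ge0$ and $\int U^{\rm 2B}_{N^\alpha}=1$. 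Such a bound would therefore force $E^{\rm QM}_{a_N,b_N,N}\ge -o(\ell_N)$. This contradicts the fact that $\ell_N^{-1}E^{\rm QM}_{a_N,b_N,N}\le \ell_N^{-1}E^{\rm H}_{a_N,b_N,N}[1]$ converges to a negative limit. Superpositions of well-separated condensates give the same obstruction even without translation invariance.

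So the attraction needs an argument that sees more than $\rho_{\Psi_N}$, for instance one based on localization and a de Finetti-type decomposition of $\Psi_N$. The paper treats this step only in one sentence invoking the L\'evy-Leblond technique. It then states the resulting bound in the same pointwise form, for any $\Psi_N$. Following that sentence will therefore not close your gap, since the obstruction above applies to it as well.
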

\medskip

The remainder of the paper presents the detailed mathematical proofs of our findings. In Appendix \ref{app:inequality}, we outline a functional inequality that proves instrumental in estimating the interaction potentials.

\medskip
\noindent{\bf Acknowledgement.} This research was funded by Vietnam Ministry of Education and Training under grant number B2026-CTT-10.

\section{A Hartree theory of homogeneous BECs with three-body interactions}

In this section, we consider the Hartree theory of homogeneous BECs. In particular, we consider \eqref{functional:h} and \eqref{energy:h}. Our concerns are the existence of Hartree ground states and its asymptotic behaviors in the large or small scaling limits. We will need preliminary estimates concerning the convergence (rate) of two-body and three-body interactions. While the following estimates \eqref{cv:h-nls-3-body}, \eqref{cv-rate:h-nls-3-body} are useful for the original three-body interaction in \eqref{functional:h}, we will need the similar estimates \eqref{cv:h-nls-3-body-modified}, \eqref{cv-rate:h-nls-3-body-modified} in the modified system \eqref{energy:h-modified} in the next section.

\begin{lemma}\label{lem:cv-hartree-potentials}
Let $\alpha,\beta>0$ and assume that $U^{\rm 2B}_{N^{\alpha}}$, $U^{\rm 3B}_{N^{\beta}}$ satisfy \eqref{scaled-interaction}-\eqref{condition:2-3-body}. Then, for every $v\in H^{1}(\mathbb R^{2})$,
\begin{align}
\int_{\mathbb R^{2}}|v(x)|^4{\rm d}x & = \lim_{N\to\infty}\iint_{\mathbb R^{4}} U^{\rm 2B}_{N^{\alpha}}(x-y)|v(x)|^{2}|v(y)|^{2}{\rm d}x{\rm d}y, \label{cv:h-nls-2-body} \\
\int_{\mathbb R^{2}}|v(x)|^{6}{\rm d}x & = \lim_{N\to\infty}\iiint_{\mathbb R^{6}} U^{\rm 3B}_{N^{\beta}}(x-y)U^{\rm 3B}_{N^{\beta}}(x-z)|v(x)|^{2}|v(y)|^{2}|v(z)|^{2}{\rm d}x{\rm d}y{\rm d}z \label{cv:h-nls-3-body} \\
& = \lim_{N\to\infty}\iint_{\mathbb R^{4}}|v(x)|^{2}\sqrt{U^{\rm 3B}_{N^{\beta}}*|v|^{2}}(x)U^{\rm 3B}_{N^{\beta}}(x-y)|v(y)|^{2}\sqrt{U^{\rm 3B}_{N^{\beta}}*|v|^{2}}(y){\rm d}x{\rm d}y. \label{cv:h-nls-3-body-modified}
\end{align}
Assume further that $xU^{\rm 2B}(x),xU^{\rm 3B}(x) \in L^{1}(\mathbb R^{2})$ then
\begin{align}
0 \leq{}& \int_{\mathbb R^{2}}|v(x)|^4{\rm d}x - \iint_{\mathbb R^{4}} U^{\rm 2B}_{N^{\alpha}}(x-y)|v(x)|^{2}|v(y)|^{2}{\rm d}x{\rm d}y \label{cv-rate:h-nls-2-body-0} \\
\leq{}& 2N^{-\alpha} \|v\|_{L^{6}}^{3} \|\nabla v\|_{L^{2}} \int_{\mathbb R^{2}} |xU^{\rm 2B}(x)| {\rm d}x, \label{cv-rate:h-nls-2-body} \\
0 \leq{}& \int_{\mathbb R^{2}}|v(x)|^{6} {\rm d}x - \iiint_{\mathbb R^{6}} U^{\rm 3B}_{N^{\beta}}(x-y)U^{\rm 3B}_{N^{\beta}}(x-z) |v(x)|^{2} |v(y)|^{2} |v(z)|^{2} {\rm d}x {\rm d}y {\rm d}z \label{cv-rate:h-nls-3-body-0} \\
\leq{}& 4N^{-\beta} \|v\|_{L^{10}}^{5} \|\nabla v\|_{L^{2}} \int_{\mathbb R^{2}} |xU^{\rm 3B}(x)| {\rm d}x, \label{cv-rate:h-nls-3-body} \\
0 \leq{}& \int_{\mathbb R^{2}}|v(x)|^{6} {\rm d}x - \iint_{\mathbb R^{4}}|v(x)|^{2}\sqrt{U^{\rm 3B}_{N^{\beta}}*|v|^{2}}(x)U^{\rm 3B}_{N^{\beta}}(x-y)|v(y)|^{2}\sqrt{U^{\rm 3B}_{N^{\beta}}*|v|^{2}}(y){\rm d}x{\rm d}y \label{cv-rate:h-nls-3-body-modified-0} \\
\leq{}& 5N^{-\beta} \|v\|_{L^{10}}^{5} \|\nabla v\|_{L^{2}} \int_{\mathbb R^{2}} |xU^{\rm 3B}(x)| {\rm d}x. \label{cv-rate:h-nls-3-body-modified}
\end{align}
\end{lemma}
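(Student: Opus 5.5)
We split the argument into the qualitative limits \eqref{cv:h-nls-2-body}--\eqref{cv:h-nls-3-body-modified} and the quantitative bounds. Throughout we write $\rho=|v|^{2}$. Since $v\in H^{1}(\mathbb R^{2})$, Sobolev embedding gives $v\in L^{p}$ for all $p\in[2,\infty)$. Hence $\rho\in L^{p}$ for every $p\in[1,\infty)$, and the convolutions $U_{N^\gamma}*\rho$ are well defined.

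\textbf{Qualitative limits.} The potentials $U_{N^\gamma}$ form an approximate identity: they are nonnegative with unit mass. Therefore $U_{N^{\alpha}}^{\rm 2B}*\rho\to\rho$ in $L^{2}$, and the 2-body limit \eqref{cv:h-nls-2-body} follows from
\[
\iint U^{\rm 2B}_{N^\alpha}(x-y)\rho(x)\rho(y)\,{\rm d}x{\rm d}y=\langle \rho, U^{\rm 2B}_{N^\alpha}*\rho\rangle .
\]
For the 3-body term we observe that the triple integral equals $\int \rho\,(U^{\rm 3B}_{N^\beta}*\rho)^{2}$. Since $U^{\rm 3B}_{N^\beta}*\rho\to\rho$ in $L^{4}$ and $\rho\in L^{2}\cap L^{3}$, Hölder's inequality gives convergence to $\int\rho^{3}=\int|v|^{6}$. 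For the modified term set $f_N=\rho\sqrt{U^{\rm 3B}_{N^\beta}*\rho}$; the term is then $\langle f_N, U^{\rm 3B}_{N^\beta}*f_N\rangle$. Because $\sqrt{U*\rho}\to\sqrt{\rho}$ in $L^{4}$ (by the inequality $|\sqrt a-\sqrt b|\le\sqrt{|a-b|}$), we have $f_N\to\rho^{3/2}$ in $L^{2}$. Together with $\|U*f_N-U*\rho^{3/2}\|_{L^2}\le\|f_N-\rho^{3/2}\|_{L^2}$, this yields \eqref{cv:h-nls-3-body-modified}.

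\textbf{Lower bounds $0\le$.} For the 2-body term, Young/Cauchy--Schwarz gives $\langle\rho,U*\rho\rangle\le\|\rho\|_{L^2}\|U*\rho\|_{L^2}\le\|\rho\|_{L^{2}}^{2}$, which is \eqref{cv-rate:h-nls-2-body-0}. For the 3-body term, Hölder gives $\int\rho(U*\rho)^{2}\le\|\rho\|_{L^3}\|U*\rho\|_{L^3}^{2}\le\|\rho\|_{L^{3}}^{3}$, which is \eqref{cv-rate:h-nls-3-body-0}. For the modified term, $\langle f_N,U*f_N\rangle\le\|f_N\|_{L^2}^{2}=\int\rho^{2}(U*\rho)\le\|\rho\|_{L^3}^{3}$, where the last step is the previous Hölder argument with exponents $(3/2,3)$.

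\textbf{Rates.} The key tool is the pointwise representation
\[
\rho(x)-\rho(x-y)=\int_{0}^{1}y\cdot\nabla\rho(x-ty)\,{\rm d}t ,
\]
together with the change of variables $\int|y|\,U_{N^\gamma}(y)\,{\rm d}y=N^{-\gamma}\int|x U(x)|\,{\rm d}x$. For the 2-body term, $\int\rho^{2}-\langle\rho,U*\rho\rangle=\int\!\!\int U_{N^\alpha}(y)\rho(x)\bigl(\rho(x)-\rho(x-y)\bigr)$. We insert the representation, use $|\nabla\rho|\le2|v||\nabla v|$, and bound $\int\rho(x)|v||\nabla v|(x-ty)\,{\rm d}x\le\|\rho\|_{L^{3}}\|v\|_{L^6}\|\nabla v\|_{L^2}=\|v\|_{L^6}^{3}\|\nabla v\|_{L^2}$ uniformly in $t,y$. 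This gives the bound \eqref{cv-rate:h-nls-2-body} with constant $2$.

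For the 3-body term we telescope:
\[
\rho^{3}-\rho(U*\rho)^{2}=\rho\,(\rho-U*\rho)(\rho+U*\rho).
\]
Each of the two resulting pieces is treated as above. The remaining integrals are of the form $\int\rho^{2}(\cdot)\,|v||\nabla v|$ or $\int\rho\,(U*\rho)\,|v||\nabla v|$, which we control by Hölder as $\|\rho^{2}\|_{L^{5/2}}\ldots$. Concretely, we use $\|\rho\|_{L^{5}}^{2}\|v\|_{L^{10}}\|\nabla v\|_{L^2}=\|v\|_{L^{10}}^{5}\|\nabla v\|_{L^2}$, which is valid because $\frac{2}{5}+\frac{1}{10}+\frac12=1$, together with Young's inequality $\|U*\rho\|_{L^5}\le\|\rho\|_{L^5}$. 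This produces the constant $2\cdot2=4$. For the modified term we similarly write
\[
\rho^{3}-f_N(U*f_N)=\bigl(\rho^{3}-\rho^{2}(U*\rho)\bigr)+\bigl(f_N^{2}-f_N\,U*f_N\bigr).
\]
The first bracket contributes $2$ by the same argument. The second bracket is handled by the symmetric identity
\[
\langle f,f\rangle-\langle f,U*f\rangle=\tfrac12\iint U(y)\,|f(x)-f(x-y)|^{2}\,{\rm d}x{\rm d}y ,
\]
or else by differentiating $f_N=\rho\sqrt{U*\rho}$, which yields roughly $3$.

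\textbf{Main obstacle.} The step I expect to be hardest is the last one, namely bookkeeping the gradient of $\sqrt{U*\rho}$. This gradient is singular where $U*\rho$ vanishes. I would try to avoid differentiating it by writing the difference via the first-order Taylor representation only on $\rho$ factors. The intended manipulation is to bound $f_N(x)-f_N(x-y)$ by terms with $\rho(x)-\rho(x-y)$ times $\sqrt{U*\rho}$, plus $\rho(x-y)\bigl(\sqrt{U*\rho}(x)-\sqrt{U*\rho}(x-y)\bigr)$. For the second of these we use $\bigl|\sqrt a-\sqrt b\bigr|\,(\sqrt a+\sqrt b)=|a-b|$ and then $\nabla(U*\rho)=U*\nabla\rho$. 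This is what should yield the stated constant $5$.
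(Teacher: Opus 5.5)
Your proposal is correct. For the rates it follows the paper's skeleton: the Taylor formula $\rho(x)-\rho(x-y)=\int_0^1 y\cdot\nabla\rho(x-ty)\,{\rm d}t$, H\"older with $L^{5}$/$L^{10}$/$L^{2}$ exponents, $\int|y|U_{N^\beta}(y)\,{\rm d}y=N^{-\beta}\int|xU(x)|\,{\rm d}x$, and, for the modified term, adding and subtracting $\int\rho^{2}(U*\rho)=\|f_N\|_{L^2}^2$. It departs from the paper in two places.

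\emph{Qualitative limits.} The paper splits $U=\mathbbm{1}(|y|>L)U+\mathbbm{1}(|y|\le L)U$, uses $\nabla v$ on the short-range part with $1\ll L\ll N^{\beta}$, and needs the appendix inequality for the modified term. Your approximate-identity argument is shorter and uses no derivative of $v$ at all; it only needs $v\in L^4\cap L^8$.

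\emph{Second bracket of the modified rate.} The paper applies the Taylor formula to $f_N=\rho\sqrt{U*\rho}$ itself. It controls the square root by the appendix bound $|\nabla\sqrt{g*|v|^2}|\le\sqrt{g*|\nabla v|^2}$, which follows from the diamagnetic and Cauchy--Schwarz inequalities. This pointwise bound removes the singularity you were worried about, and it gives exactly $2+1=3$ for that bracket.

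Your algebraic route avoids that lemma, but as described it does not give the constant $5$. Set $g=U_{N^\beta}*\rho$. Your square-root piece is
\[
D_2=\iint U_{N^\beta}(y)\rho(x)\rho(x-y)\sqrt{g(x)}\bigl(\sqrt{g(x)}-\sqrt{g(x-y)}\bigr)\,{\rm d}x\,{\rm d}y .
\]
Suppose you take absolute values and use $\sqrt{g(x)}\le\sqrt{g(x)}+\sqrt{g(x-y)}$ in order to apply $|\sqrt a-\sqrt b|(\sqrt a+\sqrt b)=|a-b|$. You then get $\iint U_{N^\beta}(y)\rho(x)\rho(x-y)|g(x)-g(x-y)|$. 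Since $|\nabla g|\le 2\,U_{N^\beta}*(|v||\nabla v|)$, this costs $2$, and your total becomes $2+2+2=6$, not $5$.

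The fix is to keep the sign and symmetrize. Change variables $(x,y)\mapsto(x-y,-y)$ and use $U(-y)=U(y)$; averaging the two expressions gives
\[
D_2=\tfrac12\iint U_{N^\beta}(y)\rho(x)\rho(x-y)\bigl(\sqrt{g(x)}-\sqrt{g(x-y)}\bigr)^2\,{\rm d}x\,{\rm d}y\le\tfrac12\iint U_{N^\beta}(y)\rho(x)\rho(x-y)\,|g(x)-g(x-y)|\,{\rm d}x\,{\rm d}y .
\]
So this piece costs only $1$, and the total is $2+2+1=5$. The same computation also shows $D_2\ge 0$.
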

\begin{proof}
The identity \eqref{cv:h-nls-2-body} can be found in \cite[Lemma 7]{LewNamRou-17}. We prove \eqref{cv:h-nls-3-body} by the same arguments. We first use H\"older and Young inequalities to obtain
\begin{align}\label{cv:h-nls-3-body-positive}
\iiint_{\mathbb R^{6}} U^{\rm 3B}_{N^{\beta}}(x-y)U^{\rm 3B}_{N^{\beta}}(x-z)|v(x)|^{2}|v(y)|^{2}|v(z)|^{2}{\rm d}x{\rm d}y{\rm d}z & = \int_{\mathbb R^{2}} |v(x)|^{2}(U^{\rm 3B}_{N^{\beta}}*|v|^{2})(x)^{2}{\rm d}x \nonumber \\
& \leq \|v\|_{L^{6}}^{2}\|U^{\rm 3B}_{N^{\beta}}*|v|^{2}\|_{L^{3}}^{2} \nonumber \\
& \leq \|v\|_{L^{6}}^{6}.
\end{align}
Here we have used the fact that 
\begin{equation}\label{3body:normalized}
\int_{\mathbb{R}^{2}} U^{\rm 3B}_{N^{\beta}}(x) {\rm d}x = \int_{\mathbb{R}^{2}} U^{\rm 3B}(x) {\rm d}x =1.
\end{equation}
In order to prove the reverse inequality, we use the changes of variables $y \mapsto x - N^{-\beta}y$ and rewrite
\begin{align}\label{cv:h-nls-3-body-1}
& \int_{\mathbb R^{2}}|v(x)|^{6}{\rm d}x - \iiint_{\mathbb R^{6}} U^{\rm 3B}_{N^{\beta}}(x-y)U^{\rm 3B}_{N^{\beta}}(x-z) |v(x)|^{2} |v(y)|^{2} |v(z)|^{2} {\rm d}x {\rm d}y {\rm d}z \nonumber \\
& = \int_{\mathbb R^{2}} \left[|v(x)|^{6} - |v(x)|^{2}(U^{\rm 3B}_{N^{\beta}}*|v|^{2})(x)^{2}\right]{\rm d}x \nonumber \\
& = \int_{\mathbb R^{2}} |v(x)|^{2} \left[|v(x)|^{2} + (U^{\rm 3B}_{N^{\beta}}*|v|^{2})(x)\right] \left[|v(x)|^{2} - (U^{\rm 3B}_{N^{\beta}}*|v|^{2})(x)\right] {\rm d}x \nonumber \\
& = \iint_{\mathbb R^{4}} |v(x)|^{2} \left[|v(x)|^{2} + (U^{\rm 3B}_{N^{\beta}}*|v|^{2})(x)\right] \left[|v(x)|^{2} - |v(x-N^{-\beta}y)|^{2}\right] U^{\rm 3B}(y) {\rm d}x{\rm d}y.
\end{align}
As in \cite{LewNamRou-17}, we split the integral over $y$ into two parts. We pick $L>0$ and decompose
\begin{equation}\label{decomposition:three-body}
U^{\rm 3B}(y) = \mathbbm{1}(|y|>L)U^{\rm 3B}(y) + \mathbbm{1}(|y|\leq L)U^{\rm 3B}(y).
\end{equation}
On the one hand, we use H\"older, Minkowski and Young inequalities to obtain
\begin{align}\label{cv:h-nls-3-body-2}
& \iint_{\mathbb R^{4}} |v(x)|^{2} \left[|v(x)|^{2} + (U^{\rm 3B}_{N^{\beta}}*|v|^{2})(x)\right] \left[|v(x)|^{2} - |v(x-N^{-\beta}y)|^{2}\right] \mathbbm{1}(|y|>L)U^{\rm 3B}(y) {\rm d}x{\rm d}y \nonumber \\
& \leq \int_{\mathbb R^{2}}\|v\|_{L^{6}}^{2} \||v|^{2}+U^{\rm 3B}_{N^{\beta}}*|v|^{2}\|_{L^{3}} \||v|^{2}-|v(\cdot-N^{-\beta}y)|^{2}\|_{L^{3}}\mathbbm{1}(|y|>L)U^{\rm 3B}(y) {\rm d}y \nonumber \\
& \leq 4\|v\|_{L^{6}}^{6}\int_{\mathbb R^{2}}\mathbbm{1}(|y|>L)U^{\rm 3B}(y) {\rm d}y,
\end{align}
where we have used again the normalization \eqref{3body:normalized}. On the other hand, we use
the diamagnetic inequality to obtain
\begin{align*}
\left||v(x)|^{2} - |v(x-N^{-\beta}y)|^{2}\right| &= \left|\int_{0}^{1} N^{-\beta}y \cdot \nabla |v|^{2}(x-tN^{-\beta}y) {\rm d}t\right| \\
& \leq 2N^{-\beta}|y| \int_{0}^{1} |v(x-tN^{-\beta}y)||\nabla v(x-tN^{-\beta}y)| {\rm d}t.
\end{align*}
This yields that
\begin{align}\label{cv:h-nls-3-body-3}
& \iint_{\mathbb R^{4}} |v(x)|^{2} \left[|v(x)|^{2} + (U^{\rm 3B}_{N^{\beta}}*|v|^{2})(x)\right] \left[|v(x)|^{2} - |v(x-N^{-\beta}y)|^{2}\right] \mathbbm{1}(|y|\leq L)U^{\rm 3B}(y) {\rm d}y{\rm d}x \nonumber \\
& \leq 2N^{-\beta}L \int_{0}^{1}\iint_{\mathbb R^{4}} |v(x)|^{2} \left[|v(x)|^{2} + (U^{\rm 3B}_{N^{\beta}}*|v|^{2})(x)\right] \times \nonumber \\
& \quad \times |v(x-tN^{-\beta}y)||\nabla v(x-tN^{-\beta}y)| U^{\rm 3B}(y) {\rm d}x{\rm d}y{\rm d}t \nonumber \\
& \leq 2N^{-\beta}L \int_{0}^{1}\int_{\mathbb R^{2}}\|v\|_{L^{10}}^{2} \||v|^{2}+U^{\rm 3B}_{N^{\beta}}*|v|^{2}\|_{L^{5}} \|v(\cdot-tN^{-\beta}y)\|_{L^{10}} \|\nabla v(\cdot-tN^{-\beta}y)\|_{L^{2}} U^{\rm 3B}(y) {\rm d}y{\rm d}t \nonumber \\
& \leq 4N^{-\beta}L \|v\|_{L^{10}}^{5}\|\nabla v\|_{L^{2}},
\end{align}
where we have used again H\"older, Minkowski and Young inequalities as well as \eqref{3body:normalized}. Putting all together \eqref{cv:h-nls-3-body-1}, \eqref{cv:h-nls-3-body-2}, \eqref{cv:h-nls-3-body-3}, and choosing $1 \ll L = L_{N} \ll N^{\beta}$ (for example $L= N^{\frac{\beta}{2}}$) we obtain the desired convergence \eqref{cv:h-nls-3-body}. 

The convergence \eqref{cv:h-nls-3-body-modified} can be proved analogously. We first use Cauchy--Schwarz inequality and \eqref{cv:h-nls-3-body-positive} to obtain
\begin{align}\label{cv:h-nls-3-body-positive-modified}
& \iint_{\mathbb R^{4}}|v(x)|^{2}\sqrt{U^{\rm 3B}_{N^{\beta}}*|v|^{2}}(x)U^{\rm 3B}_{N^{\beta}}(x-y)|v(y)|^{2}\sqrt{U^{\rm 3B}_{N^{\beta}}*|v|^{2}}(y){\rm d}x{\rm d}y \nonumber \\
& \leq \iint_{\mathbb R^{4}}U^{\rm 3B}_{N^{\beta}}(x-y)|v(x)|^{2}\frac{(U^{\rm 3B}_{N^{\beta}}*|v|^{2})(x)+(U^{\rm 3B}_{N^{\beta}}*|v|^{2})(y)}{2}|v(y)|^{2}{\rm d}x{\rm d}y \nonumber \\
& = \int_{\mathbb R^{2}} |v(x)|^{2}(U^{\rm 3B}_{N^{\beta}}*|v|^{2})(x)^{2}{\rm d}x \\
& \leq \|v\|_{L^{6}}^{6}. \nonumber 
\end{align}
To prove the reverse inequality, we add and subtract $\int_{\mathbb R^{2}}|v(x)|^{4}(U^{\rm 3B}_{N^{\beta}}*|v|^{2})(x) {\rm d}x$ and process as follows. On the one hand, we use the decomposition \eqref{decomposition:three-body} to estimate
\begin{align*}
& \left| \int_{\mathbb R^{2}}|v(x)|^{6}{\rm d}x - \int_{\mathbb R^{2}}|v(x)|^{4}(U^{\rm 3B}_{N^{\beta}}*|v|^{2})(x) {\rm d}x \right| \\
& \leq \iint_{\mathbb R^{4}}|v(x)|^{4} \left| |v(x)|^{2} - |v(x-N^{-\beta}y)|^{2} \right| \mathbbm{1}(|y|>L)U^{\rm 3B}(y) {\rm d}x{\rm d}y \\
& \quad + 2N^{-\beta}L\int_{0}^{1}\iint_{\mathbb R^{4}} |v(x)|^{4} |v(x-tN^{-\beta}y)| |\nabla v(x-tN^{-\beta}y)| U^{\rm 3B}(y) {\rm d}x{\rm d}y{\rm d}t \\
& \leq \int_{\mathbb R^{2}}\|v\|_{L^{6}}^{4}\||v|^{2} - |v(\cdot - N^{-\beta}y)|^{2}\|_{L^{3}}\mathbbm{1}(|y|>L)U^{\rm 3B}(y){\rm d}y \\
& \quad + 2N^{-\beta}L \int_{0}^{1}\int_{\mathbb R^{2}}\|v\|_{L^{10}}^{4}\|v(\cdot-tN^{-\beta}y)\|_{L^{10}} \|\nabla v(\cdot-tN^{-\beta}y)\|_{L^{2}} U^{\rm 3B}(y) {\rm d}y{\rm d}t \\
& \leq 2\|v\|_{L^{6}}^{6}\int_{\mathbb R^{2}}\mathbbm{1}(|y|>L)U^{\rm 3B}(y){\rm d}y + 2N^{-\beta}L\|v\|_{L^{10}}^{5}\|\nabla v\|_{L^{2}}
\end{align*}
which converges to $0$, by choosing again $1 \ll L = L_{N} \ll N^{\beta}$. On the other hand, we use \cite[Lemma 7]{LewNamRou-17} to estimate
\begin{align*}
& \left| \int_{\mathbb R^{2}}|v(x)|^{4}(U^{\rm 3B}_{N^{\beta}}*|v|^{2})(x) {\rm d}x - \iint_{\mathbb R^{4}}|v(x)|^{2}\sqrt{U^{\rm 3B}_{N^{\beta}}*|v|^{2}}(x)U^{\rm 3B}_{N^{\beta}}(x-y)|v(y)|^{2}\sqrt{U^{\rm 3B}_{N^{\beta}}*|v|^{2}}(y){\rm d}x{\rm d}y \right| \\
& \leq \iint_{\mathbb R^{4}} \mathbbm{1}(|y|>L)U^{\rm 3B}(y) |v(x)|^{2}\sqrt{U^{\rm 3B}_{N^{\beta}}*|v|^{2}}(x) \times \\
& \quad \times \Big| |v(x)|^{2}\sqrt{U^{\rm 3B}_{N^{\beta}}*|v|^{2}}(x) - |v(x)|^{2}\sqrt{U^{\rm 3B}_{N^{\beta}}*|v|^{2}}(x-N^{-\beta}y) \Big| {\rm d}x{\rm d}y \\
& \quad + N^{-\beta}L \int_{0}^{1}\iint_{\mathbb R^{4}}|v(x)|^{2}\sqrt{U^{\rm 3B}_{N^{\beta}}*|v|^{2}}(x) \left| \nabla \left(|v|^{2}\sqrt{U^{\rm 3B}_{N^{\beta}}*|v|^{2}}\right)(x-tN^{-\beta}y) \right| U^{\rm 3B}(y) {\rm d}x{\rm d}y{\rm d}t \\
& \leq 2\||v|^{2}\sqrt{U^{\rm 3B}_{N^{\beta}}*|v|^{2}}\|_{L^{2}}^{2} \int_{\mathbb R^{2}}\mathbbm{1}(|y|>L)U^{\rm 3B}(y){\rm d}y \\
& \quad + N^{-\beta}L \int_{0}^{1}\int_{\mathbb R^{2}} \||v|^{2}\|_{L^{5}} \left\|\sqrt{U^{\rm 3B}_{N^{\beta}}*|v|^{2}}\right\|_{L^{10}} \Big(\left\||v(\cdot-tN^{-\beta}y)|^{2}\right\|_{L^{5}} \left\|\sqrt{U^{\rm 3B}_{N^{\beta}}*|\nabla v|^{2}}(\cdot-tN^{-\beta}y)\right\|_{L^{2}} \\
& \quad + 2\left\|v(\cdot-tN^{-\beta}y)\right\|_{L^{10}} \left\|\nabla v(\cdot-tN^{-\beta}y)\right\|_{L^{2}} \left\|\sqrt{U^{\rm 3B}_{N^{\beta}}*|v|^{2}}(\cdot-tN^{-\beta}y)\right\|_{L^{10}}\Big) {\rm d}y{\rm d}t \\
& \leq 2\|v\|_{L^{6}}^{6} \int_{\mathbb R^{2}}\mathbbm{1}(|y|>L)U^{\rm 3B}(y){\rm d}y + 3N^{-\beta}L \|v\|_{L^{10}}^{5} \|\nabla v\|_{L^{2}}.
\end{align*}
Here we have used the chain rule, \Cref{lem:kinetic}, diamagnetic, H\"older and Young inequalities as well as \eqref{cv:h-nls-3-body-positive}.

The convergence rate \eqref{cv-rate:h-nls-2-body} can be found in \cite[Lemma 4.1]{LewNamRou-16}. We adapt the arguments therein to prove \eqref{cv-rate:h-nls-3-body} and \eqref{cv-rate:h-nls-3-body-modified}. We use the diamagnetic, Minkowski and Young inequalities to estimate
\begin{align*}
& \left| \int_{\mathbb R^{2}}|v(x)|^{6}{\rm d}x - \iiint_{\mathbb R^{6}} U^{\rm 3B}_{N^{\beta}}(x-y)U^{\rm 3B}_{N^{\beta}}(x-z) |v(x)|^{2} |v(y)|^{2} |v(z)|^{2} {\rm d}x {\rm d}y {\rm d}z \right| \nonumber \\
& = \left| \int_{0}^{1}\iint_{\mathbb R^{4}} |v(x)|^{2} \left[|v(x)|^{2} + (U^{\rm 3B}_{N^{\beta}}*|v|^{2})(x)\right] \nabla |v(x)|^{2}(x-t N^{-\beta}y)|^{2} \cdot N^{-\beta}yU^{\rm 3B}(y) {\rm d}x{\rm d}y{\rm d}t \right| \\
& \leq 2N^{-\beta} \int_{0}^{1}\int_{\mathbb R^{2}} \||v|^{2}\|_{L^{5}} \left( \||v|^{2}\|_{L^{5}} + \|U^{\rm 3B}_{N^{\beta}}*|v|^{2}\|_{L^{5}} \right) \times \\
& \quad \times \|v(\cdot-t N^{-\beta}y)\|_{L^{10}} \|\nabla v(\cdot-t N^{-\beta}y)\|_{L^{2}} |yU^{\rm 3B}(y)| {\rm d}y{\rm d}t \\
& \leq 4N^{-\beta} \|v\|_{L^{10}}^{5} \|\nabla v\|_{L^{2}} \int_{\mathbb R^{2}} |yU^{\rm 3B}(y)| {\rm d}y.
\end{align*}
The above is \eqref{cv-rate:h-nls-3-body}. In order to prove \eqref{cv-rate:h-nls-3-body-modified}, we use the decomposition as in the proof of \eqref{cv:h-nls-3-body-modified}. We first use the diamagnetic, Minkowski and Young inequalities to estimate
\begin{align}\label{cv-rate:h-nls-3-body-modified-1}
& \left| \int_{\mathbb R^{2}}|v(x)|^{6}{\rm d}x - \int_{\mathbb R^{2}}|v(x)|^{4}(U^{\rm 3B}_{N^{\beta}}*|v|^{2})(x) {\rm d}x \right| \nonumber \\
& = \left| \int_{0}^{1}\iint_{\mathbb R^{4}} |v(x)|^{4} \nabla |v(x)|^{2}(x-t N^{-\beta}y)|^{2} \cdot N^{-\beta}yU^{\rm 3B}(y) {\rm d}x{\rm d}y{\rm d}t \right| \nonumber \\
& \leq 2N^{-\beta} \int_{0}^{1}\int_{\mathbb R^{2}} \||v|^{4}\|_{L^{\frac{5}{2}}} \|v(\cdot-t N^{-\beta}y)\|_{L^{10}} \|\nabla v(\cdot-t N^{-\beta}y)\|_{L^{2}} |yU^{\rm 3B}(y)| {\rm d}y{\rm d}t \nonumber \\
& \leq 2N^{-\beta} \|v\|_{L^{10}}^{5} \|\nabla v\|_{L^{2}} \int_{\mathbb R^{2}} |yU^{\rm 3B}(y)| {\rm d}y.
\end{align}
Similarly,
\begin{align}\label{cv-rate:h-nls-3-body-modified-2}
& \left| \int_{\mathbb R^{2}}|v(x)|^{4}(U^{\rm 3B}_{N^{\beta}}*|v|^{2})(x) {\rm d}x - \iint_{\mathbb R^{4}}|v(x)|^{2}\sqrt{U^{\rm 3B}_{N^{\beta}}*|v|^{2}}(x)U^{\rm 3B}_{N^{\beta}}(x-y)|v(y)|^{2}\sqrt{U^{\rm 3B}_{N^{\beta}}*|v|^{2}}(z){\rm d}x{\rm d}y \right| \nonumber \\
& = \left| \int_{0}^{1}\iint_{\mathbb R^{4}} U^{\rm 3B}(y) \left| \nabla \left(|v|^{2}\sqrt{U^{\rm 3B}_{N^{\beta}}*|v|^{2}}\right)(x-tN^{-\beta}y) \right| |v(x)|^{2}\sqrt{U^{\rm 3B}_{N^{\beta}}*|v|^{2}}(x) {\rm d}x{\rm d}y{\rm d}t \right| \nonumber \\
& \leq N^{-\beta} \int_{0}^{1}\int_{\mathbb R^{2}} |yU^{\rm 3B}(y)| \Big(2\Big\|v(\cdot-tN^{-\beta}y)\Big\|_{L^{10}} \Big\|\sqrt{U^{\rm 3B}_{N^{\beta}}*|v|^{2}}(\cdot-tN^{-\beta}y)\Big\|_{L^{10}} \Big\|\nabla v(\cdot-tN^{-\beta}y)\Big\|_{L^{2}} \nonumber \\
& \quad + \Big\||v(\cdot-tN^{-\beta}y)|^{2}\Big\|_{L^{5}} \Big\|\sqrt{U^{\rm 3B}_{N^{\beta}}*|\nabla v|^{2}}(\cdot-tN^{-\beta}y)\Big\|_{L^{2}}\Big) \||v|^{2}\|_{L^{5}} \Big\|\sqrt{U^{\rm 3B}_{N^{\beta}}*|v|^{2}}\Big\|_{L^{10}} {\rm d}y{\rm d}t \nonumber \\
& \leq 3N^{-\beta} \|v\|_{L^{10}}^{5} \|\nabla v\|_{L^{2}} \int_{\mathbb R^{2}} |yU^{\rm 3B}(y)| {\rm d}y.
\end{align}
The convergence rate \eqref{cv-rate:h-nls-3-body-modified} then follows from \eqref{cv-rate:h-nls-3-body-modified-1}, \eqref{cv-rate:h-nls-3-body-modified-2} and the triangle inequality.
\end{proof}

We are now in the position to prove \Cref{thm:existence-h,thm:behaviors-h} on the existence and behaviors of Hartree ground states. In order to make use of \Cref{lem:cv-hartree-potentials}, it is worth noting that $U^{\rm 2B} \in L^{\infty}(\mathbb R^{2})$ since $\widehat{U^{\rm 2B}} \in L^{1}(\mathbb R^{2})$.

\begin{proof}[Proof of \Cref{thm:existence-h}]
The proof can be done by the concentration compactness method in the calculus of variations \cite{Lions-84a} and the properties in the NLS theory. For fixed parameters $a>a_{*}$, $b>0$, and $N>0$, let $\{v_{k}\}$ be a minimizing sequence of \eqref{energy:h}, i.e., $\|v_{k}\|_{L^{2}}^{2} = 1$ and
$$
E^{\rm H}_{a,b,N}[1] = \lim_{k\to\infty}\mathcal{E}_{a,b,N}^{\rm H}[v_{k}].
$$
For further process, we will need a property on the negativity of the Hartree energy, which is predicted by the one of the NLS. Indeed, by the variational principle, \cite[Lemma 7]{LewNamRou-17}, and \eqref{cv-rate:h-nls-3-body-0}, we have
\begin{align*}
E^{\rm H}_{a,b,N}[1] \leq{} & \mathcal{E}^{\rm H}_{a,b,N}[\ell Q_{0}(\ell\cdot)] \\ 
\leq{} & \ell^{2}\|\nabla Q_{0}\|_{L^{2}}^{2} - \ell^{2}a\|Q_{0}\|_{L^{4}}^{4} \left(\frac{1}{2} - \int_{|z|\geq L}U^{\rm 2B}(z){\rm d}z\right) \\
& + \ell^{3}aN^{-\alpha} L \|Q_{0}\|_{L^{6}}^{3} \|\nabla Q_{0}\|_{L^{2}} + \ell^{4}\frac{b}{6}\|Q_{0}\|_{L^{6}}^{6} \\
={} & \ell^{2}a\|Q_{0}\|_{L^{4}}^{4} \left(\frac{1}{2}\left(\frac{a_{*}}{a} - 1\right) - \int_{|z|\geq L}U^{\rm 2B}(z){\rm d}z\right) \\
& + \ell^{3}aN^{-\alpha} L \|Q_{0}\|_{L^{6}}^{3} \|\nabla Q_{0}\|_{L^{2}} + \ell^{4}\frac{b}{6}\|Q_{0}\|_{L^{6}}^{6}.
\end{align*}
Here $Q_{0}$ is the (unique) normalized optimizer of \eqref{ineq:gn}, and $\ell,L>0$ to be chosen. 
Now, we choose $L>0$ large enough such that
$$
\int_{|x|\geq L}U^{\rm 2B}(x){\rm d}x = 1 - \int_{|x|\leq L}U^{\rm 2B}(x){\rm d}x < \frac{1}{2}\left(1-\frac{a_{*}}{a}\right).
$$
Then the above yields that, for every $a>a_{*}$, $b>0$, and $N>0$,
\begin{equation}\label{energy:hartree-negativity}
E^{\rm H}_{a,b,N}[1] < 0
\end{equation}
provided that $\ell>0$ was chosen small enough.

Now, we show that the minimizing sequence $\{v_{k}\}$ is bounded uniformly in $H^{1}(\mathbb R^{2})$. For this purpose, we make use of an additional assumption that $\beta > \alpha$. In which case, we will show that the three-body interaction compensates for the two-body interaction. We assume on the contrary that there exits a subsequence of $\{v_{k}\}$ (still denote by $\{v_{k}\}$) such that
$$
L_{k} := \|\nabla v_{k}\|_{L^{2}} \xrightarrow{k\to\infty} \infty.
$$
Let's denote $\widetilde{v_{k}} := L_{k}^{-1}v_{k}(L_{k}^{-1}\cdot)$. By using \eqref{cv:h-nls-2-body}, we obtain
\begin{align}\label{boundedness:Hartree-1}
\mathcal{E}_{a,b,N}^{\rm H}[v_{k}] \geq{} & L_{k}^{2}\left[\int_{\mathbb R^{2}} |\nabla \widetilde{v_{k}}(x)|^{2}{\rm d}x - \frac{a}{2}\int_{\mathbb R^{2}} |\widetilde{v_{k}}(x)|^{4}{\rm d}x\right] \nonumber \\
& + L_{k}^{4}\frac{b}{6} \int_{\mathbb R^{2}} |\widetilde{v_{k}}(x)|^{2}(U^{\rm 3B}_{N^{\beta}L_{k}^{-1}}*|\widetilde{v_{k}}|^{2})(x)^{2}{\rm d}x.
\end{align}
Multiplying both sides by $L_{k}^{-2}$, neglecting the nonnegative three-body term, and taking limit $k\to\infty$, we obtain
\begin{equation}\label{boundedness:Hartree-2}
\lim_{k\to\infty}\int_{\mathbb R^{2}} |\widetilde{v_{k}}(x)|^{4}{\rm d}x \geq \frac{2}{a}\int_{\mathbb R^{2}} |\nabla \widetilde{v_{k}}(x)|^{2}{\rm d}x = \frac{2}{a} > 0.
\end{equation}
Here we have used the fact that $\|\nabla \widetilde{v_{k}}\|_{L^{2}} = 1$. On the other hand, we note that $L_{k} \leq CN^{\alpha}$, by the $L^{\infty}$-bound of the two-body interaction and the nonnegativity of the three-body interaction. If $0<\alpha<\beta$ then
$$
N^{\beta}L_{k}^{-1} \geq CL_{k}^{\frac{\beta}{\alpha}-1} \xrightarrow{k\to\infty} \infty.
$$
Multiplying both sides of \eqref{boundedness:Hartree-1} by $L_{k}^{-4}$, using \eqref{ineq:gn}, \eqref{cv:h-nls-3-body-modified}, and taking the limit $k\to\infty$, we obtain
$$
\lim_{k\to\infty}\int_{\mathbb R^{2}} |\widetilde{v_{k}}(x)|^{6}{\rm d}x = 0.
$$
By interpolation, we also have that $\widetilde{v_{k}} \to 0$ strongly in $L^{p}(\mathbb R^{2})$, for all $2 < p \leq 6$. This, however, contradicts \eqref{boundedness:Hartree-2}.

Therefore, we must have the uniform boundedness of $\{v_{k}\}$ in $H^{1}(\mathbb R^{2})$. Up to translation and extraction of a subsequence, $v_{k}$ converges to $v_{0}$ weakly in $H^{1}(\mathbb R^{2})$ and almost everywhere in $\mathbb R^{2}$. We claim that this is actually the strong convergence. We first argue that $v_{0} \not\equiv 0$ since otherwise we must have the strong convergence of $v_{k}$ to $0$ in $L^{p}(\mathbb R^{2})$, for all $2<p<\infty$. It then follows that
$$
E^{\rm H}_{a,b,N}[1] = \lim_{k\to\infty}\mathcal{E}_{a,b,N}^{\rm H}[v_{k}] \geq 0
$$
which contradicts \eqref{energy:hartree-negativity}. Assume now that $\|v_{0}\|_{L^{2}}^{2}=1$ then, by Brezis--Lieb lemma \cite{BreLie-83}, $v_{k}$ converges to $v_{0}$ strongly in $L^{2}(\mathbb R^{2})$. In fact, this strong convergence holds in $L^{p}(\mathbb R^{2})$, for all $2 \leq p<\infty$, by Sobolev embedding. It follows from Fatou's lemma that
$$
E^{\rm H}_{a,b,N}[1] = \lim_{k\to\infty}\mathcal{E}_{a,b,N}^{\rm H}[v_{k}] \geq \mathcal{E}_{a,b,N}^{\rm H}[v_{0}] \geq E^{\rm H}_{a,b,N}[1].
$$
The above yields that $v_{0}$ is a ground state of $E^{\rm H}_{a,b,N}[1]$.

It remains to eliminate the case where $0 < \lambda := \|v_{0}\|_{L^{2}}^{2} < 1$. We assume on the contrary that this is the case. Since $\{v_{k}\}$ is bounded uniformly in $H^{1}(\mathbb R^{2})$, we use \eqref{cv-rate:h-nls-2-body-0}, \eqref{cv-rate:h-nls-3-body}, and the Brezis--Lieb lemma \cite{BreLie-83} to split the energy as follows
\begin{align}\label{energy:h-split-1}
E^{\rm H}_{a,b,N}[1] = \lim_{k\to\infty}\mathcal{E}_{a,b,N}^{\rm H}[v_{k}] & \geq \lim_{k\to\infty}\mathcal{E}_{a,b}^{\rm NLS}[v_{k}] - o(1)_{N\to\infty} \nonumber \\
& = \mathcal{E}_{a,b}^{\rm NLS}[v_{0}] + \lim_{k\to\infty}\mathcal{E}_{a,b}^{\rm NLS}[v_{k}-v_{0}] - o(1)_{N\to\infty} \nonumber \\
& \geq E^{\rm NLS}_{a,b}[\lambda] + E^{\rm NLS}_{a,b}[1-\lambda] - o(1)_{N\to\infty}.
\end{align}
On the other hand, if $v_{0}$ is a ground state of $E^{\rm NLS}_{a,b}[1]$, which exists for every $a>a_{*}$ and $b>0$, then by \eqref{cv-rate:h-nls-2-body}, \eqref{cv-rate:h-nls-3-body-0}, and the variational principle,
\begin{equation}\label{energy:h-split-2}
E^{\rm H}_{a,b,N}[1] \leq \mathcal{E}_{a,b,N}^{\rm H}[v_{0}] \leq \mathcal{E}_{a,b}^{\rm NLS}[v_{0}] + o(1)_{N\to\infty} = E^{\rm NLS}_{a,b}[1] + o(1)_{N\to\infty}.
\end{equation}
Putting together \eqref{energy:h-split-1} and \eqref{energy:h-split-2}, we arrive at
\begin{equation}\label{energy:h-split-N}
o(1)_{N\to\infty} + E^{\rm NLS}_{a,b}[1] \geq E^{\rm NLS}_{a,b}[\lambda] + E^{\rm NLS}_{a,b}[1-\lambda].
\end{equation}
We shall show that this is impossible for $N$ large enough. It is worth noting that, for $a,b,\lambda>0$, either $E^{\rm NLS}_{a,b}[\lambda] = 0$ or $E^{\rm NLS}_{a,b}[\lambda] < 0$ and admits a ground state, namely $v_{\lambda}$. Furthermore, since $E^{\rm NLS}_{a,b}[1] < 0$ when $a>a_{*}$ and $b>0$, the left-hand side of \eqref{energy:h-split-N} is strictly negative when $N$ is sufficient large. In the following, we distinguish the three cases.
\begin{itemize}
\item If $E^{\rm NLS}_{a,b}[\lambda] = 0 = E^{\rm NLS}_{a,b}[1-\lambda]$ then \eqref{energy:h-split-N} obviously fail for $N$ large enough.

\item If either $E^{\rm NLS}_{a,b}[\lambda]$ or $E^{\rm NLS}_{a,b}[1-\lambda]$ vanishes, we assume without loss of generality that $E^{\rm NLS}_{a,b}[1-\lambda] = 0$, then we have
\begin{align}\label{energy:h-split-lambda}
E^{\rm NLS}_{a,b}[\lambda] = \mathcal{E}^{\rm NLS}_{a,b}[v_{\lambda}] & = \int_{\mathbb{R}^{2}}\left[|\nabla \widetilde{v_{\lambda}}|^{2} - \frac{a}{2} \|v_{\lambda}\|_{L^{2}}^{2} |\widetilde{v_{\lambda}}|^{4} + b\|v_{\lambda}\|_{L^{2}}^{2} |\widetilde{v_{\lambda}}|^{6} \right] \nonumber  \\
& = \lambda\mathcal{E}^{\rm NLS}_{a,b}[\widetilde{v_{\lambda}}] + (1-\lambda)\|\nabla \widetilde{v_{\lambda}}\|_{L^{2}}^{2} \nonumber \\
& \geq \lambda E^{\rm NLS}_{a,b}[1] + (1-\lambda)\|\nabla v_{\lambda}\|_{L^{2}}^{2} \\
& > E^{\rm NLS}_{a,b}[1]. \nonumber 
\end{align}
In this context, we define $\widetilde{v_{\lambda}} = v_{\lambda}\big(\|v_{\lambda}\|_{L^{2}}\cdot\big)$, which satisfies $\|\widetilde{v_{\lambda}}\|_{L^{2}}^{2} = 1$, and used the negativity of $E^{\rm NLS}_{a,b}[1]$. It follows again that \eqref{energy:h-split-N} fail for $N$ large enough.
\item Finally, if both $E^{\rm NLS}_{a,b}[\lambda]$ and $E^{\rm NLS}_{a,b}[1-\lambda]$ do not vanish then, by similar arguments as above,
\begin{equation}\label{energy:h-split-1-lambda}
E^{\rm NLS}_{a,b}[1-\lambda] \geq (1-\lambda)E^{\rm NLS}_{a,b}[1] + \lambda\|\nabla v_{1-\lambda}\|_{L^{2}}^{2}.
\end{equation}
We then deduce from \eqref{energy:h-split-N}, \eqref{energy:h-split-lambda} and \eqref{energy:h-split-1-lambda} that
$$
o(1)_{N\to\infty} \geq (1-\lambda)\|\nabla v_{\lambda}\|_{L^{2}}^{2} + \lambda\|\nabla v_{1-\lambda}\|_{L^{2}}^{2}.
$$
This is again a contradiction for $N$ large enough.
\end{itemize}

\end{proof}

\begin{proof}[Proof of \Cref{thm:behaviors-h}]
Firstly, we establish the cubic-quintic approximation for the Hartree problem in the limit $a_{N} \to a_{0}$ with $a_{0}>a_{*}$, as shown in \Cref{thm:behaviors-h}\ref{thm:behavior-h-cubic-quintic}. The energy upper bound in \eqref{cv:energy-h-cubic-quintic} follows from the variational principle. Taking a ground state $v_{0}$ for $E^{\rm NLS}_{a_{0},1}[1]$, using \eqref{cv:h-nls-2-body}, \eqref{cv:h-nls-3-body}, \eqref{cv:h-nls-3-body-positive-modified}, we have
\begin{align*}
E^{\rm H}_{a_{N},b_{N},N}[1] & \leq \mathcal{E}^{\rm H}_{a_{N},b_{N},N}\left[b_{N}^{-\frac{1}{2}}v_{0}\big(b_{N}^{-\frac{1}{2}}\cdot\big)\right] \\
& \leq b_{N}^{-1}\left(\mathcal{E}^{\rm NLS}_{a_{N},1}[v_{0}] + a_{N}\|v_{0}\|_{L^{4}}^{4}\int_{|z|\geq L}U^{\rm 2B}(z){\rm d}z + a_{N}b_{N}^{-\frac{1}{2}}N^{-\alpha} L \|v_{0}\|_{L^{6}}^{3} \|\nabla v_{0}\|_{L^{2}}\right).
\end{align*}
Since $a_{N} \to a_{0}$ with $a_{*}<a_{0}<\infty$, the condition $b_{N}^{-\frac{1}{2}}N^{-\alpha} \ll 1$ is therefore assumed to ensure that the error terms in the above are negligible. For instance, one can choose $L = b_{N}^{\frac{1}{4}}N^{\frac{\alpha}{2}} \gg 1$. This gives the energy upper bound in \eqref{cv:energy-h-cubic-quintic}. We establish the energy lower bound in \eqref{cv:energy-h-cubic-quintic} by proving the convergence of the sequence of Hartree ground states $\{v_{N}\}$ in \eqref{cv:gs-h-cubic-quintic}. For this purpose, we set
$$
w_{N} := b_{N}^{\frac{1}{2}} v_N \big(b_{N}^{\frac{1}{2}} \cdot \big).
$$
Then $\|w_{N}\|_{L^2} = \|v_{N}\|_{L^2} = 1$ and we rewrite the energy functional as follows
\begin{align}\label{boundedness:h-scaled-1}
b_{N}\mathcal{E}_{a_{N},b_{N},N}^{\rm H}[v_{N}]
={} & \left[\int_{\mathbb R^{2}} |\nabla w_{N}(x)|^{2}{\rm d}x - \frac{a_{N}}{2}\int_{\mathbb R^{2}} |w_{N}(x)|^{2}(U^{\rm 2B}_{N^{\alpha}b_{N}^{\frac{1}{2}}}*|w_{N}|^{2})(x){\rm d}x\right] \nonumber \\
& + \frac{1}{6} \int_{\mathbb R^{2}} |w_{N}(x)|^{2}(U^{\rm 3B}_{N^{\beta}b_{N}^{\frac{1}{2}}}*|w_{N}|^{2})(x)^{2}{\rm d}x.
\end{align}
The left-hand side of the above is bounded uniformly, by the energy upper bound in \eqref{cv:energy-h-cubic-quintic}. Then, under the assumptions that $\beta>\alpha$ and that $a_{N} \to a_{0}$ with $a_{*}<a_{0}<\infty$, we can prove that $\{w_{N}\}$ is bounded uniformly in $H^{1}(\mathbb R^{2})$, by the same arguments as in the proof of \Cref{thm:behaviors-h}. Up to translation and extraction of a subsequence, $w_{N}$ converges to $w_{0}$ weakly in $H^{1}(\mathbb R^{2})$ and almost everywhere in $\mathbb R^{2}$. We claim that this is actually the strong convergence. We first argue that $w_{0} \not\equiv 0$ since otherwise we must have the strong convergence of $w_{N}$ to $0$ in $L^{p}(\mathbb R^{2})$, for all $2<p<\infty$. Taking the limit $N\to\infty$ in \eqref{boundedness:h-scaled-1}, using \Cref{lem:cv-hartree-potentials} and the energy upper bound in \eqref{cv:energy-h-cubic-quintic}, we obtain
$$
E_{a_{0},1}^{\rm NLS}[1] \geq \lim_{N\to\infty}b_{N}\mathcal{E}_{a_{N},b_{N},N}^{\rm H}[v_{N}] \geq 0.
$$
This, however, contradicts the negatively of $E_{a_{0},1}^{\rm NLS}[1]$ for $a_{0}>a_{*}$ (see \cite{DoaNgu-26}). Therefore, $w_{0} \not\equiv 0$. Next, we eliminate the case where $0 < \lambda := \|w_{0}\|_{L^{2}}^{2} < 1$. We assume on the contrary that this is the case. Then taking again the limit $N\to\infty$ in \eqref{boundedness:h-scaled-1}, using \Cref{lem:cv-hartree-potentials}, Brezis--Lieb lemma \cite{BreLie-83}, and the energy upper bound in \eqref{cv:energy-h-cubic-quintic}, we split the energy as follows
\begin{align}\label{energy:h-split}
E_{a_{0},1}^{\rm NLS}[1] \geq \lim_{N\to\infty}b_{N}\mathcal{E}_{a_{N},b_{N},N}^{\rm H}[v_{N}] & \geq \lim_{N\to\infty}\mathcal{E}_{a_{0},1}^{\rm NLS}[w_{N}] \nonumber \\
& = \mathcal{E}_{a_{0},1}^{\rm NLS}[w_{0}] + \lim_{N\to\infty}\mathcal{E}_{a_{0},1}^{\rm NLS}[w_{N}-w_{0}] \nonumber \\
& \geq E_{a_{0},1}^{\rm NLS}[\lambda] + E_{a_{0},1}^{\rm NLS}[1-\lambda].
\end{align}
This, however, contradicts the strict binding inequality
$$
E_{a_{0},1}^{\rm NLS}[1] < E_{a_{0},1}^{\rm NLS}[\lambda] + E_{a_{0},1}^{\rm NLS}[1-\lambda]
$$
for every $a_{0}>a_{*}$ and $0<\lambda<1$, which can be proved by similar (even simpler) arguments in the proof of \Cref{thm:existence-h}. Hence, we have proved that we must have that $\|w_{0}\|_{L^{2}}^{2}=1$. Then, by Brezis--Lieb lemma \cite{BreLie-83}, $w_{N}$ converges to $w_{0}$ strongly in $L^{2}(\mathbb R^{2})$. In fact, this strong convergence holds in $L^{p}(\mathbb R^{2})$, for all $2 \leq p<\infty$, by Sobolev embedding. Again taking the limit $N\to\infty$ in \eqref{boundedness:h-scaled-1}, using \Cref{lem:cv-hartree-potentials}, Fatou's lemma, and the energy upper bound in \eqref{cv:energy-h-cubic-quintic}, we obtain
$$
E_{a_{0},1}^{\rm NLS}[1] \geq \lim_{N\to\infty}b_{N}\mathcal{E}_{a_{N},b_{N},N}^{\rm H}[v_{N}] \geq \lim_{N\to\infty}\mathcal{E}_{a_{0},1}^{\rm NLS}[w_{N}] \geq \mathcal{E}_{a_{0},1}^{\rm NLS}[w_{0}] \geq E_{a_{0},1}^{\rm NLS}[1].
$$
The above yields \eqref{cv:energy-h-cubic-quintic} as well as the convergence $\|\nabla w_{N}\|_{L^{2}}^{2} \to \|\nabla w_{0}\|_{L^{2}}^{2}$. Hence, $w_{N} \to w_{0}$ strongly in $H^{1}(\mathbb R^{2})$, by Br\'ezis--Lieb lemma \cite{BreLie-83}. This completes the proof of \eqref{cv:gs-h-cubic-quintic} and \Cref{thm:behaviors-h}\ref{thm:behavior-h-cubic-quintic} as well.

Secondly, we establish the cubic approximation for the Hartree problem in the limit $a_{N} \to a_{*}$, as shown in \Cref{thm:behaviors-h}\ref{thm:behavior-h-cubic}. Let $Q_{0}$ be the unique $L^{2}$-normalized optimizer for \eqref{ineq:gn} and $\ell_{N}$ be as in \eqref{cv:gs-h-cubic}. By \eqref{cv-rate:h-nls-2-body}, \eqref{cv-rate:h-nls-3-body-0}, and the variational principle,
\begin{align}\label{cv:energy-cubic-2d-upper-bound}
E^{\rm H}_{a_{N},b_{N},N}[1] & \leq \mathcal{E}_{a_{N},b_{N},N}^{\rm H}\left[\ell_{N}^{-\frac{1}{2}} Q_{0}\big(\ell_{N}^{-\frac{1}{2}} \cdot\big)\right] \nonumber \\
& \leq -\ell_{N}^{-1}\frac{a_{N} - a_{*}}{2} \|Q_{0}\|_{L^{4}}^{4} + \ell_{N}^{-2}\frac{b_{N}}{6} \|Q_{0}\|_{L^{6}}^{6} + a_{N}N^{-\alpha}\ell_{N}^{-\frac{3}{2}} \|Q_{0}\|_{L^{6}}^{3} \|\nabla Q_{0}\|_{L^{2}} \int_{\mathbb R^{2}} \left|xU^{\rm 2B}(x)\right| {\rm d}x\nonumber \\
& = \frac{(a_{N} - a_{*})^{2}}{b_{N}} \left(-\frac{3\|Q_{0}\|_{L^{4}}^{8}}{8\|Q_{0}\|_{L^{6}}^{6}} + CN^{-\alpha}(a_{N}-a_{*})^{-\frac{1}{2}}b_{N}^{-\frac{1}{2}}\right).
\end{align}
Since $a_{N} \to a_{*}$, the condition $N^{-\alpha}(a_{N}-a_{*})^{-\frac{1}{2}}b_{N}^{-\frac{1}{2}} \ll 1$ is therefore assumed to ensure that the error term in the above is negligible. This gives the energy upper bound in \eqref{cv:energy-h-cubic}. We prove the matching energy lower bound in \eqref{cv:energy-h-cubic} by showing the convergence of ground states in \eqref{cv:gs-h-cubic}. For this purpose, we set
$$
w_{N}(x) := \ell_{N}^{\frac{1}{2}} v_{N}(\ell_{N}^{\frac{1}{2}} x)
$$
where $\ell_{N}$ is given in \eqref{cv:gs-h-cubic}. Then $\|w_{N}\|_{L^2} = \|v_{N}\|_{L^2} = 1$ and we rewrite the energy functional as follows
\begin{align}
\ell_{N}^{2}b_{N}^{-1}\mathcal{E}^{\rm H}_{a_{N}, b_{N},N} [v_{N}] ={} & \ell_{N}b_{N}^{-1}\left[\int_{\mathbb R^{2}} |\nabla w_{N}(x)|^{2}{\rm d}x - \frac{a_{N}}{2}\int_{\mathbb R^{2}} |w_{N}(x)|^{2}(U^{\rm 2B}_{N^{\alpha}\ell_{N}^{\frac{1}{2}}}*|w_{N}|^{2})(x){\rm d}x\right] \nonumber \\
& + \frac{1}{6} \int_{\mathbb R^{2}} |w_{N}(x)|^{2}(U^{\rm 3B}_{N^{\beta}\ell_{N}^{\frac{1}{2}}}*|w_{N}|^{2})(x)^{2}{\rm d}x.\label{cv:energy-cubic-2d-lower-bound-2}
\end{align}
The left-hand side of the above is bounded uniformly, by the energy upper bound in \eqref{cv:energy-h-cubic}. Then, under the assumptions that $\beta>\alpha$ and that $a_{N} \to a_{*}$, we can prove that $\{w_{N}\}$ is bounded uniformly in $H^{1}(\mathbb R^{2})$, by the same arguments as in the proof of \Cref{thm:existence-h}. 

It follows from the energy upper bound in \eqref{cv:energy-h-cubic}, \eqref{ineq:gn}, and \eqref{cv:energy-cubic-2d-lower-bound-2} that
\begin{equation}\label{cv:energy-cubic-2d-boundedness}
-1 +o(1)_{N\to\infty} \geq -2\|Q_{0}\|_{L^{4}}^{-4} \|w_{N}\|_{L^{4}}^{4} + \|Q_{0}\|_{L^{6}}^{-6} \|w_{N}\|_{L^{6}}^{6}.
\end{equation}
Since $\|w_{N}\|_{L^{2}}^{2}=1$ and $\|w_{N}\|_{L^{4}}^{4} \leq \|w_{N}\|_{L^{6}}^{3}$, by H\"older inequality, we deduce from \eqref{cv:energy-cubic-2d-boundedness} that $\|w_{N}\|_{L^{4}}$ as well as $\|w_{N}\|_{L^{6}}$ are bounded uniformly from above and below. This also yields the uniform boundedness of $\|\nabla w_{N}\|_{L^{2}}$. Indeed, looking back at \eqref{cv:energy-cubic-2d-lower-bound-2}, by multiplying both sides by $a_{N}-a_{*}$, neglecting the nonnegative three-body interaction, using again the energy upper bound in \eqref{cv:energy-h-cubic}, and taking the limit $N\to\infty$, we obtain
\begin{equation}\label{cv:gs-cubic-2d-profile}
0 \geq \lim_{N\to\infty}\|\nabla w_{N}\|_{L^{2}}^{2} - \frac{a_{N}}{2} \|w_{N}\|_{L^{4}}^{4} = \lim_{N\to\infty} \|\nabla w_{N}\|_{L^{2}}^{2} - \frac{a_{*}}{2} \|w_{N}\|_{L^{4}}^{4} \geq 0.
\end{equation}
Here we have used \eqref{ineq:gn} in the last inequality. Therefore, the equality in \eqref{cv:gs-cubic-2d-profile} must occur and this implies in particular the uniform boundedness of $\{w_{N}\}$ in $H^1(\mathbb{R}^{2})$. Up to translation and extraction of a subsequence, $w_{N}$ converges to $w_{0}$ weakly in $H^{1}(\mathbb R^{2})$ and almost everywhere in $\mathbb R^{2}$. We claim that $w_{0} \not\equiv 0$ since otherwise we must have the strong convergence of $w_{N}$ to $0$ in $L^{p}(\mathbb R^{2})$, for all $2<p<\infty$. This, however, contradicts \eqref{cv:energy-cubic-2d-boundedness}. Now, looking back at \eqref{cv:gs-cubic-2d-profile}, we use Brezis--Lieb lemma \cite{BreLie-83} and \eqref{ineq:gn} to obtain
\begin{align*}
0 & = \lim_{N\to\infty} \|\nabla w_{N}\|_{L^{2}}^{2} - \frac{a_{*}}{2} \|w_{N}\|_{L^{4}}^{4} \\
& \geq \|\nabla w_{0}\|_{L^{2}}^{2} - \frac{a_{*}}{2} \|w_{0}\|_{L^{4}}^{4} + \lim_{N\to\infty} \|\nabla (w_{N}-w_{0})\|_{L^{2}}^{2} - \frac{a_{*}}{2} \|w_{N}-w_{0}\|_{L^{4}}^{4} \\
& \geq \left(1-\|w_{0}\|_{L^{2}}^{2}\right) \|\nabla w_{0}\|_{L^{2}}^{2} \\
& \geq 0.
\end{align*}
Here we have utilized the facts that $0 < \|w_{0}\|_{L^{2}}^{2} \leq 1$. Therefore, the equality must occur in the above and we must have that $\|w_{0}\|_{L^{2}}^{2} = 1$. By Br\'ezis--Lieb lemma \cite{BreLie-83}, $w_{N} \to w_{0}$ strongly in $L^{2}(\mathbb R^{2})$. By Sobolev embedding, this strong convergence holds in $L^{p}(\mathbb R^{2})$, for all $2 \leq p<\infty$. Then the above yields the convergence $\|\nabla w_{N}\|_{L^{2}}^{2} \to \|\nabla w_{0}\|_{L^{2}}^{2}$. Hence, $w_{N} \to w_{0}$ strongly in $H^{1}(\mathbb R^{2})$, again by Br\'ezis--Lieb lemma \cite{BreLie-83}. Furthermore, it also yields that $w_{0}$ is an optimizer for \eqref{ineq:gn}, which is unique (up to translation and dilation). Up to a translation, $w_{0}(x) = \sqrt{t}Q_0(\sqrt{t}x)$ where $t>0$ and $Q_{0}$ is the unique $L^{2}$-normalized optimizer for \eqref{ineq:gn}. To complete the proof, we prove that $t = 1$ and hence $w_{0} \equiv Q_0$. Indeed, by taking the limit $N\to \infty$ in \eqref{cv:energy-cubic-2d-boundedness}, we obtain
$$
-1 \geq -2\|Q_{0}\|_{L^{4}}^{-4} \|w_{0}\|_{L^{4}}^{4} + \|Q_{0}\|_{L^{6}}^{-6} \|w_{0}\|_{L^{6}}^{6} = -2t +t^{2} \geq -1. 
$$
Here we have used the Cauchy--Schwarz inequality in the last inequality. Obviously, the equality in the above occurs at $t=1$. Finally, the convergence \eqref{cv:gs-cubic-2d} holds for the whole sequence since the limiting profile $Q_{0}$ is unique. The proof is completed.

Finally, we establish the TF approximation for the Hartree problem in the limit $a_{N} \to \infty$, as shown in \Cref{thm:behaviors-h}\ref{thm:behavior-h-tf}. In order to estimate the energy upper bound in \eqref{cv:energy-h-tf}, we follow the arguments in \cite{DoaNgu-26} and we need to further control the error term made of the two-body interaction. Let $g_{\mu} = (2\pi)^{-1}\mu e^{-\sqrt{\mu} |x|}$ with $\mu>0$ to be chosen, and $\varrho^{\rm TF}_{1,1}$ denotes the (unique) ground state of $E^{\rm TF}_{1,1}[1]$. We consider 
$$
v_{N} = \sqrt{g_{\mu}*\varrho^{\rm TF}_{N}} \quad \text{with} \quad \varrho^{\rm TF}_{N}(x) = \frac{a_{N}}{b_{N}}\varrho^{\rm TF}_{1,1}\left(\left(\frac{a_{N}}{b_{N}}\right)^{\frac{1}{2}}x\right)
$$
as the trial state for $E^{\rm H}_{a_{N},b_{N},N}[1]$. It is worth noting that $\int_{\mathbb R^{2}}g_{\mu} = 1$ and $|\nabla g_{\mu}| = \sqrt{\mu} g_{\mu}$, for all $\mu>0$, which implies that
\begin{equation}\label{cv:energy-h-tf-kinetic}
|\nabla v_{N}| = \frac{|\nabla g_{\mu}*\varrho^{\rm TF}_{N}|}{2\sqrt{g_{\mu}*\varrho^{\rm TF}_{N}}} \leq \frac{\sqrt{\mu}}{2}\sqrt{g_{\mu}*\varrho^{\rm TF}_{N}}.
\end{equation}
By the variational principle, \cite[Lemma 7]{LewNamRou-17}, \eqref{cv-rate:h-nls-2-body}, \eqref{cv-rate:h-nls-3-body-0}, \eqref{cv:energy-h-tf-kinetic}, and Young inequality, we have
\begin{align*}
E^{\rm H}_{a_{N},b_{N},N}[1] \leq{} & \mathcal{E}^{\rm H}_{a_{N},b_{N},N}[v_{N}] \\
\leq{} & \|\nabla v_{N}\|_{L^{2}}^{2} + a_{N}N^{-\alpha}L\|\nabla v_{N}\|_{L^{2}}\|v_{N}\|_{L^{6}}^{3} \\
& - \left(1-2\int_{|z|\geq L}U^{\rm 2B}(z){\rm d}z\right)\frac{a_{N}}{2} \int_{\mathbb R^{2}}|v_{N}(x)|^{4}{\rm d}x + \frac{b_{N}}{6} \int_{\mathbb R^{2}}|v_{N}(x)|^{6}{\rm d}x \\
\leq{} & \frac{\mu}{4} + \frac{a_{N}}{2}\sqrt{\mu}N^{-\alpha}L\|\varrho^{\rm TF}_{N}\|_{L^{3}}^{\frac{3}{2}} + (1-o(1)_{N\to\infty})\frac{a_{N}}{2}\left(\int_{\mathbb R^{2}}\varrho^{\rm TF}_{N}(x)^{2} - \int_{\mathbb R^{2}}(g_{\mu}*\varrho^{\rm TF}_{N})(x)^{2}\right) \\
& - (1-o(1)_{N\to\infty})\frac{a_{N}}{2} \int_{\mathbb R^{2}}\varrho^{\rm TF}_{N}(x)^{2}{\rm d}x + \frac{b_{N}}{6} \int_{\mathbb R^{2}}\varrho^{\rm TF}_{N}(x)^{3}{\rm d}x \\
={} & \frac{\mu}{4} + \frac{a_{N}^{2}}{2b_{N}}\sqrt{\mu}N^{-\alpha}L\|\varrho^{\rm TF}_{1,1}\|_{L^{3}}^{\frac{3}{2}} \\
& + (1-o(1)_{N\to\infty})\frac{a_{N}^{2}}{2b_{N}}\left(\int_{\mathbb R^{2}}\varrho^{\rm TF}_{1,1}(x)^{2} - \int_{\mathbb R^{2}}\left(g_{\mu \frac{b_{N}}{a_{N}}}*\varrho^{\rm TF}_{1,1}\right)(x)^{2}\right) \\
& + (1+o(1)_{N\to\infty})E^{\rm TF}_{a_{N},b_{N}}.
\end{align*}
Here $L=L_{N} \gg 1$ to be chosen. Now, we optimize over $\mu>0$ the first two terms on the right hand side of the above, and we choose $L$ in such a way that the optimal $\mu$ satisfies
$$
\mu = \mu_{N} = \mathcal{O}\left(\left(\frac{a_{N}^{2}}{b_{N}}N^{-\alpha}L\right)^{2}\right) = o\left(E^{\rm TF}_{a_{N},b_{N}}[1]\right)_{N\to\infty} = o\left(\frac{a_{N}^{2}}{b_{N}}\right)_{N\to\infty} \quad \text{and} \quad \lim_{N\to\infty}\mu_{N} \frac{b_{N}}{a_{N}} = \infty.
$$
This can be done under the assumption that $\dfrac{a_{N}^{2}}{b_{N}} \ll N^{2\alpha}$. For instance, we can take
$$
L = L_{N }= \left(\frac{b_{N}}{a_{N}^{2}}N^{2\alpha}\right)^{\frac{1}{2}} \left(\min\left\{\left(\frac{b_{N}}{a_{N}^{2}}N^{2\alpha}\right)^{\frac{1}{4}},a_{N}^{\frac{1}{2}}\right\}\right)^{-1} \gg 1.
$$
Consequently, it follows from the above that (see, e.g., \cite[Theorem 2.16]{LieLos-01})
$$
\lim_{N\to\infty}g_{\mu \frac{b_{N}}{a_{N}}}*\varrho^{\rm TF}_{1,1} = \varrho^{\rm TF}_{1,1}
$$
strongly in $L^{2}(\mathbb R^{2})$. Putting all together, we obtain the desired energy upper bound in \eqref{cv:energy-h-tf}.

We estimate the matching energy lower bound in \eqref{cv:energy-h-tf} by proving the convergence of ground states in \eqref{cv:gs-h-tf}. For this purpose, we need to control the error term made of the three-body interaction. Let $\{v_{N}\}$ be a sequence of ground states of $E^{\rm H}_{a_{N},b_{N},N}[1]$, for $a_{N} > a_{*}$, $b_{N} > 0$, and $N$ large enough. Setting 
$$
\varrho_{N}(x) = \frac{b_{N}}{a_{N}} v_{N}\left(\left(\frac{b_{N}}{a_{N}}\right)^{\frac{1}{2}}x\right)^{2}.
$$
Using \eqref{cv-rate:h-nls-2-body-0}, \eqref{cv-rate:h-nls-3-body}, and the nonnegativity of the kinetic term, we estimate
\begin{align*}
E^{\rm H}_{a_{N},b_{N},N}[1] ={} & \mathcal{E}^{\rm H}_{a_{N},b_{N},N}[v_{N}] \\
\geq{} & -\frac{a_{N}}{2} \int_{\mathbb R^{2}}|v_{N}(x)|^{4}{\rm d}x + \left(1-4\int_{|z|\geq L}U^{\rm 3B}(z){\rm d}z\right)\frac{b_{N}}{6} \int_{\mathbb R^{2}}|v_{N}(x)|^{6}{\rm d}x \\
& - \frac{2}{3}b_{N}N^{-\beta}L \|v_{N}\|_{L^{10}}^{5} \|\nabla v_{N}\|_{L^{2}} \\
={} & \frac{a_{N}^{2}}{b_{N}} \left[-\frac{1}{2}\int_{\mathbb R^{2}}\varrho_{N}(x)^{2}{\rm d}x + \frac{1-o(1)_{N\to\infty}}{6}\int_{\mathbb R^{2}}\varrho_{N}(x)^{3}{\rm d}x - \frac{2b_{N}^{2}}{3a_{N}^{2}}N^{5\alpha-\beta}L\right].
\end{align*}
Here $L=L_{N} \gg 1$ to be chosen, and we have used the fact that 
$$
\|v_{N}\|_{L^{10}}^{\frac{5}{4}} \leq C\|\nabla v_{N}\|_{L^{2}} \leq CN^{\alpha},
$$
by the Sobolev embedding in $H^{1}(\mathbb R^{2})$, the $L^{\infty}$-bound of the two-body interaction, the nonnegativity of the three-body interaction, and the energy upper bound in \eqref{cv:energy-h-tf} as well as the condition $\dfrac{a_{N}^{2}}{b_{N}} \ll N^{2\alpha}$. Under the further assumption that $\dfrac{b_{N}^{2}}{a_{N}^{2}}N^{5\alpha-\beta} \ll 1$, we can choose $L$ in such a way that the error term in the above is of order $1$. Together with the energy upper bound in \eqref{cv:energy-h-tf}, we deduce that
\begin{equation}\label{cv:energy-TF-lower-bound}
o(1)_{N\to\infty} + E^{\rm TF}_{1,1}[1] \geq -\frac{1}{2}\int_{\mathbb R^{2}}\varrho_{N}(x)^{2}{\rm d}x + \frac{1-o(1)_{N\to\infty}}{6}\int_{\mathbb R^{2}}\varrho_{N}(x)^{3}{\rm d}x.
\end{equation}
By the H\"older inequality, \eqref{cv:energy-TF-lower-bound} yields that $\{\varrho_{N}\}_{N}$ is bounded uniformly in $L^{1}\cap L^{3}(\mathbb R^{2})$. Unfortunately, the compactness of $\{\varrho_{N}\}_{N}$ is not obvious, due to the locality of the TF functional. But the one of its rearrangement, namely $\varrho_{N}^{*}$, can be recovered by using its radially symmetric decreasing property. This nice property was used in the studies of the homogeneous cubic-quintic NLS and TF theory \cite{DoaNgu-26}. Because of the norm-preserving, i.e., $\|\varrho_{N}\|_{L^{p}} = \|\varrho_{N}^{*}\|_{L^{p}}$, for all $p \geq 1$, we have that $\{\varrho_{N}^{*}\}_{N}$ is still bounded uniformly in $L^{1}\cap L^{3}(\mathbb R^{2})$. Up to translation and extracting a subsequence, $\varrho_{N}^{*} \to \varrho_{0}$ weakly in $L^{1}\cap L^{3}(\mathbb R^{2})$ and pointwise almost everywhere in $\mathbb R^{2}$. Since, $\{\varrho_{N}^{*}\}_{N}$ is radially symmetric decreasing, we can prove that $\varrho_{N}^{*} \to \varrho_{0}$ strongly in $L^{p}(\mathbb R^{2})$, for all $1 < p < 3$, by the arguments in \cite{DoaNgu-26}. We prove that this convergence also holds true in $L^{1}(\mathbb R^{2})$ and $L^{3}(\mathbb R^{2})$ as well.

We first note that the limiting profile $\varrho_{0}$ is non-trivial since otherwise we must have that $E^{\rm TF}_{1,1}[1] \geq 0$, by taking the limit $N \to \infty$ in \eqref{cv:energy-TF-lower-bound}, after replacing $\varrho_{N}$ by $\varrho_{N}^{*}$. This contradicts the negativity of $E^{\rm TF}_{1,1}[1]$. On the other hand, taking again the limit $N \to \infty$ in \eqref{cv:energy-TF-lower-bound}, after replacing again $\varrho_{N}$ by $\varrho_{N}^{*}$, using Fatou's lemma and the weak convergence $\varrho_{N}^{*} \wto \varrho_{0}$ in $L^{3}(\mathbb R^{2})$ as well as the strong convergence $\varrho_{N}^{*} \to \varrho_{0}$ in $L^{2}(\mathbb R^{2})$, we obtain
$$
E^{\rm TF}_{1,1}[1] \geq \mathcal{E}^{\rm TF}_{1,1}[\varrho_{0}] = \mathcal{E}^{\rm TF}_{1,1}[\widetilde{\varrho_{0}}]\int_{\mathbb R^{2}}\varrho_{0}(x){\rm d}x \geq E^{\rm TF}_{1,1}[1]\int_{\mathbb R^{2}}\varrho_{0}(x){\rm d}x \geq E^{\rm TF}_{1,1}[1].
$$
In this context, we define $\widetilde{\varrho_{0}}(x) = \varrho_{0}\left(\left(\int_{\mathbb R^{2}}\varrho_{0}\right)^{\frac{1}{2}}x\right)$, which satisfies $\int_{\mathbb R^{2}}\widetilde{\varrho_{0}} = 1$. We have utilized the facts that $0 < \int_{\mathbb R^{2}}\varrho_{0} \leq 1$ and that $E^{\rm TF}_{1,1}[1] < 0$. The equality must hold in the above expression, implying that $\int_{\mathbb R^{2}}\varrho_{0} = 1$ and that $\int_{\mathbb R^{2}}(\varrho_{N}^{*})^{3} \to \int_{\mathbb R^{2}}\varrho_{0}^{3}$. By the Br\'ezis--Lieb lemma \cite{BreLie-83}, $\varrho_{N}^{*} \to \varrho_{0}$ strongly in $L^{1} \cap L^{3}(\mathbb R^{2})$. This also establishes the energy lower bound in \eqref{cv:energy-h-tf} and confirms that $\varrho_{0}$ is a ground state of $E^{\rm TF}_{1,1}[1]$. Finally, the convergence \eqref{cv:gs-h-tf} holds for the whole sequence since the limiting profile is unique. The proof is completed.

\end{proof}

\section{Asymptotic behavior of the quantum energy}\label{sec:many-body}

The objective of this section is to demonstrate Theorem \ref{thm:qm}. Given the established connection between the intermediate Hartree theory and the semiclassical NLS theory, it remains challenging to compare the quantum energy $E_{a,b,N}^{\rm QM}$ expressed in \eqref{energy:quantum} with the Hartree energy $E^{\rm H}_{a,b,N}[1]$ defined in \eqref{energy:h}. For a system of bosons at very low temperature, it is expected that all particles occupy the same quantum state. The mathematical expression of this phenomenon is that the many-body wave function is approximated by the same one-body wave function for all variables, i.e., \eqref{eq:BEC}. It is then natural to use \eqref{eq:BEC} as a trial state to derive an energy upper bound. By the calculus of variations and a simple calculation, we have
$$
E_{a,b,N}^{\mathrm{QM}} \leq \left\langle v^{\otimes N} \left|\frac{H_{a,b,N}}{N}\right| v^{\otimes N} \right\rangle = E^{\rm H}_{a,b,N}[1].
$$
The estimation of the energy lower bound presents a technical obstacle. Consequently, direct calculation of this lower bound is not feasible, necessitating the consideration of the modified Hartree theory, as outlined in \eqref{energy:h-modified}-\eqref{functional:h-modified} below. We have the following.

\begin{theorem}\label{thm:qm-h}
Under the assumptions \eqref{scaled-interaction}, \eqref{condition:2-3-body} and assuming further that $\widehat{U^{\rm 3B}} \geq 0$, we have
\begin{equation}\label{cv:energy-qm-hartree}
E^{\rm H}_{a,b,N}[1] \geq E^{\rm QM}_{a,b,N} \geq E^{\rm mH}_{a,b,N} - C\left(a_{N}N^{2\alpha-1}+b_{N}N^{4\beta-1}\right).
\end{equation}
Here $E^{\rm mH}_{a,b,N}$ is the modified Hartree energy, given by \eqref{energy:h-modified}.
\end{theorem}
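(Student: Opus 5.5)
The upper bound $E^{\rm H}_{a,b,N}[1]\ge E^{\rm QM}_{a,b,N}$ is already established above by the trial state $v^{\otimes N}$, so only the lower bound needs work. I would follow the Onsager-lemma strategy of \cite{LewNamRou-17-proc,DinNguRou-24}: bound the attractive two-body term from below and the repulsive three-body term from below by one-body expressions in the density $\rho_{\Psi}$ of an arbitrary normalized $\Psi_N$, up to errors of order $N^{2\alpha-1}$ and $N^{4\beta-1}$. The kinetic term is kept exactly. The resulting functional is then minimized over densities (equivalently over $\sqrt{\rho_\Psi}\in H^1$ via the Hoffmann-Ostenhof inequality) and yields $E^{\rm mH}_{a,b,N}$.

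\textbf{Three-body term.} This is the step where the product structure and $\widehat{U^{\rm 3B}}\ge0$ enter, through \Cref{lem:three-one}. For fixed $i$, write
\[
\sum_{j<k,\,j,k\neq i}U^{\rm 3B}_{N^\beta}(x_i-x_j)U^{\rm 3B}_{N^\beta}(x_i-x_k)=\tfrac12\Big[\Big(\sum_{j\ne i}U^{\rm 3B}_{N^\beta}(x_i-x_j)\Big)^2-\sum_{j\ne i}U^{\rm 3B}_{N^\beta}(x_i-x_j)^2\Big].
\]
The diagonal part is bounded by $\|U^{\rm 3B}_{N^\beta}\|_\infty^2 N\le CN^{4\beta+1}$ per $i$. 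After the prefactor $b/(N(N-1)(N-2))$, this contributes $O(bN^{4\beta-1})$ to the energy per particle. For the square, I would use the operator inequality $(\sum_j A_j)^2\ge 2(\sum_jA_j)\,G(x_i)-G(x_i)^2$, with $G=(N-1)\,U^{\rm 3B}_{N^\beta}*\rho$ chosen as a one-body mean-field, and then expand once more. This reduces the problem to a two-body term of the form $\sum_{i\ne j}\sqrt{G}(x_i)U^{\rm 3B}_{N^\beta}(x_i-x_j)\sqrt{G}(x_j)$.

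That two-body term is handled by Onsager's lemma with positive-definite kernel. Applying it to the weighted measure $\sum_i\sqrt G(x_i)\delta_{x_i}$ gives the lower bound $-\tfrac12 D(\rho\sqrt G,\rho\sqrt G)$ plus a diagonal self-energy of size $U^{\rm 3B}_{N^\beta}(0)\sum_iG(x_i)$, which is again $O(N^{4\beta-1})$ after normalization. Here positive-definiteness requires $\widehat{U^{\rm 3B}}\ge0$, and the weight $\sqrt{U^{\rm 3B}_{N^\beta}*\rho}$ is exactly the structure of \eqref{cv:h-nls-3-body-modified}. Choosing $\rho=\rho_\Psi/N$ recovers the modified three-body functional of \eqref{functional:h-modified}.

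\textbf{Two-body term.} This term is attractive, so its sign is already correct: I need an upper bound on $\langle\Psi,\sum_{i<j}U^{\rm 2B}_{N^\alpha}(x_i-x_j)\Psi\rangle$. The expectation equals $\tfrac12\iint U^{\rm 2B}_{N^\alpha}\rho^{(2)}_\Psi$. I would bound it by the mean-field term $\tfrac{N^2}{2}D_{U^{\rm 2B}}(\rho/N)$ plus an error of order $N\|U^{\rm 2B}_{N^\alpha}\|_\infty\le CN^{2\alpha+1}$. One route is the ``mean-field'' Onsager-type inequality in \Cref{lem:three-one}. Alternatively, one can use Cauchy--Schwarz on $\rho^{(2)}$ against $\rho\otimes\rho$ in the form that appears in the modified functional. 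Either way the error per particle is $O(aN^{2\alpha-1})$.

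The main obstacle is making the three-body lower bound close with a \emph{single} choice of mean-field function. If the two-body and three-body parts require different auxiliary densities, they will not combine into a single modified Hartree functional $E^{\rm mH}$. I would therefore take $\rho=\rho_\Psi/N$ in both, which is why the modified functional uses $\sqrt{U^{\rm 3B}*|v|^2}$ weights. The delicate point is checking that every cross term produced by completing squares has the correct sign or is bounded by $CN^{4\beta-1}$.
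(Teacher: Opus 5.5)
Your kinetic step (Hoffmann--Ostenhof) and your three-body step follow the paper's route through \Cref{lem:three-one}. That step completes the square in $\sum_j\delta_{x_j}-\chi$, symmetrizes via $\sqrt{G(x_i)G(x_j)}\le\frac12(G(x_i)+G(x_j))$, and then applies Onsager for the positive-definite kernel $U^{\rm 3B}_{N^\beta}$ on the weighted measure, with $\chi\propto\rho_\Psi$.

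The genuine gap is the attractive two-body term. The paper does not use Onsager or Cauchy--Schwarz there; it only invokes L\'evy-Leblond's technique, referring to Lewin's ICMP notes, without details. Neither of your routes can work:
\begin{itemize}
\item Onsager's lemma and \Cref{lem:three-one} give \emph{lower} bounds on $\sum_{i<j}U(x_i-x_j)$ for positive-definite $U$. That is the wrong direction for $-\frac{a}{N-1}\sum_{i<j}U^{\rm 2B}_{N^\alpha}(x_i-x_j)$.
\item They cannot be applied to $-U^{\rm 2B}$ either, since $\widehat{U^{\rm 2B}}(0)>0$.
\item There is no Cauchy--Schwarz bound of the two-body density $\rho^{(2)}_\Psi$ from above by $\rho_\Psi\otimes\rho_\Psi$.
\end{itemize}
In fact the inequality you need,
\[
\Big\langle\Psi,\sum_{i<j}U^{\rm 2B}_{N^\alpha}(x_i-x_j)\Psi\Big\rangle\le\frac12\iint_{\mathbb R^4}U^{\rm 2B}_{N^\alpha}(x-y)\rho_\Psi(x)\rho_\Psi(y)\,{\rm d}x\,{\rm d}y+CN\|U^{\rm 2B}_{N^\alpha}\|_{L^\infty},
\]
is false for this translation-invariant Hamiltonian.

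To see this, write $H_{a,b,N}=-N^{-1}\Delta_X+H_{\rm rel}$ with $X$ the centre of mass. For $a>a_*$ we have $\inf{\rm spec}\,H_{\rm rel}=NE^{\rm QM}_{a,b,N}\le NE^{\rm H}_{a,b,N}[1]<0$. So there is a symmetric relative wave function $\Theta$ with $\langle\Theta,H_{\rm rel}\Theta\rangle\le\frac N2E^{\rm H}_{a,b,N}[1]$. Set $\Psi_R=R^{-1}\phi(X/R)\,\Theta$ with $\|\phi\|_{L^2}=1$.
\begin{itemize}
\item The interaction expectations do not depend on $R$. Dropping the nonnegative kinetic and three-body terms gives $\frac{a}{N(N-1)}\langle\Psi_R,\sum_{i<j}U^{\rm 2B}_{N^\alpha}(x_i-x_j)\Psi_R\rangle\ge\frac12|E^{\rm H}_{a,b,N}[1]|$, which stays away from $0$ for fixed $a>a_*$, $b>0$.
\item But $\rho_{\Psi_R}/N$ is the convolution of $R^{-2}|\phi(\cdot/R)|^2$ with a probability density. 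Hence $\|\rho_{\Psi_R}/N\|_{L^\infty}$ and, by convexity, $\int|\nabla\sqrt{\rho_{\Psi_R}/N}|^2$ are $O(R^{-2})$. The right side above is then $O(N^2R^{-2})+O(N^{1+2\alpha})$, contradicting the left side for large $R$ and $N$.
\end{itemize}

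The same states show more. We have $\mathcal{E}^{\rm mH}_{a,b,N}[\sqrt{\rho_{\Psi_R}/N}]=O(R^{-2})$, while $\langle\Psi_R,H_{a,b,N}\Psi_R\rangle/N\le\frac12E^{\rm H}_{a,b,N}[1]+O(R^{-2})$. So no estimate $\langle\Psi,H_{a,b,N}\Psi\rangle/N\ge\mathcal{E}^{\rm mH}_{a,b,N}[\sqrt{\rho_\Psi/N}]-o(1)$ can hold, whatever is done with the two-body term. Once the kinetic energy has been traded for $\int|\nabla\sqrt{\rho_\Psi}|^2$ and the three-body energy for a functional of $\rho_\Psi$ (your single-auxiliary-density requirement), the argument cannot close. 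What is missing is an ingredient that keeps more than the one-body density or breaks translation invariance, not a sharper Onsager bound. Examples would be separating the centre of mass before Hoffmann--Ostenhof, or a localization argument. The paper's chain as written meets the same obstruction: Hoffmann--Ostenhof, \Cref{lem:three-one} with $\chi=N\varrho_{\gamma^{(1)}_{\Psi_N}}$, and the L\'evy-Leblond reference. So you cannot close the gap by citing it.

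A smaller point concerns the three-body step: the $-G(x_i)^2$ produced by completing the square does not cancel. With $G=U^{\rm 3B}*\chi$, combining \eqref{ineq:three-one-1}--\eqref{ineq:three-one-3} actually gives
\[
4\sum_i\sqrt{G}(x_i)\big(U^{\rm 3B}*(\chi\sqrt{G})\big)(x_i)-2\iint\chi\sqrt{G}\,U^{\rm 3B}\,\chi\sqrt{G}-\sum_iG(x_i)^2
\]
minus the diagonal errors. After dividing by $N^3$ with $\rho=\rho_\Psi/N$, the mean-field part is $2D_N[\rho]-\int\rho\,(U^{\rm 3B}_{N^\beta}*\rho)^2$. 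Here $D_N[\rho]$ is the three-body term of \eqref{functional:h-modified} with $|v|^2=\rho$. By the AM--GM bound behind \eqref{cv:h-nls-3-body-positive-modified}, this is at most $D_N[\rho]$, with a defect that is not small uniformly in $\rho$.

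The printed \Cref{lem:three-one} drops this term and a factor $2$. For a dense uniform configuration and $\chi$ equal to $\frac43$ times the matching density, its right side is about $\frac{32}{27}$ times its left side. So ``recovering \eqref{functional:h-modified}'' needs a separate argument that the weaker functional still converges to $\int|v|^6$ and stays coercive.
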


The proof of Theorem \ref{thm:qm-h} comprises three key components. First, we employ the Hoffmann--Ostenhof inequality \cite{Hof-77} to directly estimate the kinetic energy, i.e.,
\begin{equation}\label{ineq:HO}
\tr\left[-\Delta\gamma_{\Psi_N}^{(1)}\right] \geq \int_{\mathbb R^2}\left|\nabla\sqrt{\varrho_{\gamma_{\Psi_N}^{(1)}}}(x)\right|^{2}{\rm d}x
\end{equation}
for any wave function $\Psi_{N}$ belonging to $L^2_{\rm sym}(\mathbb R^{2N})$. Here $\gamma_{\Psi_{N}}^{(1)}$ is the $1$-particle reduced density matrices, defined by the partial trace
$$
\gamma_{\Psi_{N}}^{(1)} := \tr_{2\to N} | \Psi_{N} \rangle \langle \Psi_{N} |
$$
and $\varrho_{\gamma_{\Psi_{N}}^{(1)}}(x) = \gamma_{\Psi_{N}}^{(1)}(x,x)$ is its diagonal. Equivalently, $\gamma_{\Psi_{N}}^{(1)}$ is the trace class self-adjoint operator on $L^{2}(\mathbb R^{2})$ with the kernel
$$
\gamma_{\Psi_{N}}^{(1)}(x,y) := \int_{\mathbb R^{2(N-1)}}\overline{\Psi_{N}(x,Z)}{\Psi_{N}(y,Z)} {\rm d} Z .
$$
A proof of \eqref{ineq:HO} utilizing the convexity of the kinetic energy \cite[Theorem 7.8]{LieLos-01} is provided in \cite[Lemma 3.2]{Lewin-ICMP}.
Second, the two-body interaction potential $U^{\rm 2B}$ can be lower-bounded by a one-body cubic term, employing Onsager’s lemma \cite{Onsager-39}. Such a lemma is a classical idea used to control Coulomb-type or positive-definiteness two-body interactions by smearing point charges with a Radon measure. It is widely used in stability-of-matter proofs developed by Lieb and collaborators (see e.g., \cite{LieSei-10}). A simple argument, under the assumption of positive definiteness of $U^{\rm 2B}$, i.e., $\widehat{U^{\rm 2B}} \geq 0$, is by the observation
$$
\int_{\mathbb R^{2}}f(x)U^{\rm 2B}(x-y)f(y){\rm d}x{\rm d}y = \int_{\mathbb R^{2}}\widehat{U^{\rm 2B}}(k)\left|\widehat{f}(k)\right|^{2} \geq 0
$$
applied to $\displaystyle f = \sum_{i=1}^{N}\delta_{x_{i}} - N\varrho_{\gamma_{\Psi_{N}}^{(1)}}$. In order to treat more general two-body interactions, a technique devised by Lévy--Leblond \cite{LevLeb-69} was used. A comprehensive discussion of this approach is presented in \cite[Section 3]{Lewin-ICMP} (see also \cite{Lewin-ICM} and \cite{Rougerie-20}). Third, we adapt this strategy to derive a modified version of Onsager’s lemma tailored to the repulsive three-body interaction, showing that it can be controlled by a one-body quintic term. The factorized structure of the three-body interaction in \eqref{hamiltonian} is assumed for technical reasons. In particular, it enables us to rewrite certain three-body contributions as products of two-body convolutions, which play a crucial role in the estimates used to control the interaction terms. This reduction is essential in our argument and does not appear to extend to the original three-body interaction \eqref{3-body-interaction}. We thus work with the factorized model throughout the analysis. The corresponding result is stated in the following.

\begin{lemma}\label{lem:three-one}
If $0 \leq U^{\rm 3B}$ and $0 \leq \widehat{U^{\rm 3B}} \in L^{1}(\mathbb R^{2})$ then, for any nonnegative integrable function $\chi \geq 0$, we have
\begin{align}\label{ineq:three-one}
\sum_{1\leq i\ne j\ne k \leq N}U^{\rm 3B}(x_{i}-x_{j})U^{\rm 3B}(x_{i}-x_{k})  \geq{} & 2\sum_{1\leq i\leq N} \sqrt{U^{\rm 3B}*\chi}(x_{i}) (U^{\rm 3B}*\chi\sqrt{U^{\rm 3B}*\chi})(x_{i})\nonumber \\
& - \iint_{\mathbb R^{4}}\chi(y)\sqrt{U^{\rm 3B}*\chi}(y)U^{\rm 3B}(y-z)\chi(z)\sqrt{U^{\rm 3B}*\chi}(z){\rm d}y{\rm d}z
\nonumber \\
& - \sum_{1\leq i,j \leq N}U^{\rm 3B}(x_{i}-x_{j})^{2} - 2U^{\rm 3B}(0)\sum_{1\leq i\ne j \leq N}U^{\rm 3B}(x_{i}-x_{j}).
\end{align}
\end{lemma}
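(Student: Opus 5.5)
The plan is to reduce \eqref{ineq:three-one} to a \emph{weighted} version of the classical Onsager argument, using only the positive-definiteness $\widehat{U^{\rm 3B}}\geq 0$. Throughout we write $U:=U^{\rm 3B}$, $\varphi:=\sqrt{U*\chi}$ and $V_i:=\sum_{j\neq i}U(x_i-x_j)$.

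\textbf{Step 1: rewrite the left-hand side.} Grouping by the middle index $i$ gives
\[
\sum_{i\neq j\neq k}U(x_i-x_j)U(x_i-x_k)=\sum_{i}\Big(V_i^2-\sum_{j\neq i}U(x_i-x_j)^2\Big).
\]
Here I read the sum in the statement as running over $j\neq k$, both different from $i$. The subtracted diagonal part is bounded by $\sum_{i,j}U(x_i-x_j)^2$, which is exactly one of the error terms in \eqref{ineq:three-one}. It therefore remains to bound $\sum_i V_i^2$ from below.

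\textbf{Step 2: Onsager with weights.} For real weights $w_1,\dots,w_N$, consider the signed measure
\[
f=\sum_i w_i\delta_{x_i}-\chi\varphi .
\]
Since $\widehat U\ge 0$ and $\widehat U\in L^1$, we have $\iint f\,U\,f\ge 0$, with $U$ bounded and continuous. Expanding this gives
\begin{equation*}
\sum_{i,j}w_iw_jU(x_i-x_j)\;\geq\;2\sum_i w_i\,\big(U*(\chi\varphi)\big)(x_i)-\iint\chi\varphi(y)\,U(y-z)\,\chi\varphi(z)\,{\rm d}y\,{\rm d}z .
\end{equation*}
The last term is precisely the subtracted integral in \eqref{ineq:three-one}. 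With the choice $w_i=\varphi(x_i)$, the first term on the right is exactly the positive term of \eqref{ineq:three-one}. It therefore suffices to prove
\[
\sum_i V_i^2\;\gtrsim\;\sum_{i,j}\varphi(x_i)\varphi(x_j)\,U(x_i-x_j)
\]
up to the two error terms $\sum_{i,j}U^2$ and $2U(0)\sum_{i\neq j}U$.

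\textbf{Step 3: compare particle weights with $\varphi$ (main obstacle).} The difficulty is that $\varphi(x_i)$ depends on $\chi$, whereas $V_i$ depends only on the particles. My first attempt is to run Step 2 twice.
\begin{enumerate}[label=(\alph*)]
\item Apply Step 2 with the particle weights $w_i=\sqrt{V_i+U(0)}$. By Cauchy--Schwarz, $w_iw_j\le\tfrac12(w_i^2+w_j^2)$, so
\[
\sum_{i,j}w_iw_jU(x_i-x_j)\le\sum_i (V_i+U(0))^2 .
\]
Expanding $(V_i+U(0))^2$ produces $V_i^2$ plus the term $2U(0)\sum_{i\neq j}U$ and a diagonal piece. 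This explains where the error $2U(0)\sum_{i\neq j}U(x_i-x_j)$ comes from.
\item Replace $w_i$ by $\varphi(x_i)$ in the resulting cross term $2\sum_i w_i\,(U*(\chi\varphi))(x_i)$. To do this, I would use the two-body Onsager bound $\sum_j U(x_i-x_j)\ \text{vs}\ U*\chi$, i.e. $w_i^2$ compared with $\varphi(x_i)^2$, combined with the elementary inequality $2st\ge 2s\tau-\ldots$ applied through AM--GM. The aim is to produce exactly the terms in \eqref{ineq:three-one}.
\end{enumerate}
If the direct replacement fails pointwise, the fallback is to absorb the mismatch by choosing the smeared function in Step 2 as $\chi\varphi$ but the point weights as $\varphi(x_i)$ from the start. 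One then needs the pointwise-in-$i$ inequality
\[
V_i^2+\sum_j U(x_i-x_j)^2+2U(0)V_i\;\ge\;\sum_j\varphi(x_i)\varphi(x_j)U(x_i-x_j),
\]
to be shown via $2\varphi(x_i)\varphi(x_j)\le\varphi(x_i)^2+\varphi(x_j)^2$ and a second application of Step 2 to the measure $\sum_j\delta_{x_j}-\chi$.

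I expect this weight-exchange step to be the genuine difficulty of the proof. Steps 1 and 2 are routine bookkeeping once it is settled.
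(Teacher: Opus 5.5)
\textbf{Steps 1--2.} Your Steps 1 and 2 are correct. Step 2, Onsager's argument for the measure $\sum_i\varphi(x_i)\delta_{x_i}-\chi\varphi$ using $\widehat U\ge 0$, is exactly the last step of the paper's proof.

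\textbf{Where your Step 3 fails.} Write $U=U^{\rm 3B}$ and $S_i:=\sum_{j}U(x_i-x_j)=U(0)+V_i$. Three of your moves break down:
\begin{itemize}
\item your reduced target $\sum_i S_i^2\ge\sum_{i,j}\varphi(x_i)\varphi(x_j)U(x_i-x_j)$;
\item the exchange $\sqrt{S_i}\to\varphi(x_i)$ in 3(b), which needs $S_i\ge(U*\chi)(x_i)$;
\item your pointwise fallback.
\end{itemize}
Each asks a quantity independent of $\chi$ to dominate one that grows without bound under $\chi\mapsto t\chi$ (note $U(0)>0$ since $\widehat U\ge0$), so all three are false. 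The two-body Onsager bound only compares quadratic forms. It never gives a pointwise comparison of $S_i$ with $(U*\chi)(x_i)$.

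\textbf{The missing idea.} The paper's first step is to complete the square rather than compare:
\begin{equation*}
S_i^2\ge 2S_i\,(U*\chi)(x_i)-(U*\chi)(x_i)^2,
\end{equation*}
that is, $\sum_i (f*U)(x_i)^2\ge 0$ with $f=\sum_j\delta_{x_j}-\chi$. This pays for the mismatch with the penalty $-\sum_i(U*\chi)(x_i)^2$. Next, symmetrize $\sum_{i,j}U(x_i-x_j)(U*\chi)(x_i)$ in $i\leftrightarrow j$ and use $\tfrac12(s^2+r^2)\ge sr$. This bounds it below by $\sum_{i,j}\varphi(x_i)U(x_i-x_j)\varphi(x_j)$, to which your Step 2 applies.

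\textbf{This chain does not give the printed inequality.} Set $A:=\sum_i\varphi(x_i)\big(U*(\chi\varphi)\big)(x_i)$, $B:=\iint\chi\varphi(y)U(y-z)\chi\varphi(z)\,{\rm d}y\,{\rm d}z$ and $E:=\sum_{i,j}U(x_i-x_j)^2+2U(0)\sum_{i\ne j}U(x_i-x_j)$. The chain yields
\begin{equation*}
\sum_{i\ne j\ne k}U(x_i-x_j)U(x_i-x_k)\ \ge\ 4A-2B-\sum_i(U*\chi)(x_i)^2-E .
\end{equation*}
Reaching $2A-B-E$ would need $2A-B\ge\sum_i(U*\chi)(x_i)^2$, which fails in general.

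\textbf{The lemma as stated is false.} Place all $N$ particles at one point and let $\chi$ be a bump of mass $t$ shrinking to that point; $U$ is continuous since $\widehat U\in L^1$. The left side equals $N(N-1)(N-2)U(0)^2$. The right side tends to $U(0)^2\big(2Nt^2-t^3-N^2-2N(N-1)\big)$. So the claim reduces to $N^3\ge 2Nt^2-t^3$, which fails at $t=4N/3$, where $2Nt^2-t^3=\tfrac{32}{27}N^3$.

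\textbf{Consequences.} No weight exchange can close your Step 3. The paper's final ``putting all together'' step silently drops the factor $2$ and the term $-\sum_i(U*\chi)(x_i)^2$. What can actually be proved is either of the following:
\begin{itemize}
\item the displayed bound above;
\item the bound your 3(a) already gives, namely the stated inequality with $\varphi(x_i)$ replaced by the many-body weight $\sqrt{S_i}$ in the first term.
\end{itemize}
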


\begin{proof} 
Let $\displaystyle f = \sum_{i=1}^{N}\delta_{x_{i}} - \chi$ be a Radon measure. We have
$$
\sum_{1\leq i \leq N}\iint_{\mathbb R^{4}}f(y)U^{\rm 3B}(x_{i}-y)U^{\rm 3B}(x_{i}-z)f(z){\rm d}y{\rm d}z = \sum_{1\leq i \leq N}(f*U^{\rm 3B})(x_{i})^{2} \geq 0.
$$
This is equivalent to
\begin{align}\label{ineq:three-one-1}
\sum_{1\leq i\ne j\ne k \leq N}U^{\rm 3B}(x_{i}-x_{j})U^{\rm 3B}(x_{i}-x_{k}) \geq{} & 2\sum_{1\leq i,j \leq N}U^{\rm 3B}(x_{i}-x_{j})(U^{\rm 3B}*\chi)(x_{i}) - \sum_{1\leq i \leq N}(U^{\rm 3B}*\chi)(x_{i})^{2} \nonumber \\
& - \sum_{1\leq i,j \leq N}U^{\rm 3B}(x_{i}-x_{j})^{2} - 2U^{\rm 3B}(0)\sum_{1\leq i\ne j \leq N}U^{\rm 3B}(x_{i}-x_{j}).
\end{align}
In order to estimate the main term in \eqref{ineq:three-one-1}, we use the symmetry of $U^{\rm 3B}$ to get
\begin{align}\label{ineq:three-one-2}
\sum_{1\leq i,j \leq N}U^{\rm 3B}(x_{i}-x_{j})(U^{\rm 3B}*\chi)(x_{i}) & = \sum_{1\leq i,j \leq N}U^{\rm 3B}(x_{i}-x_{j})(U^{\rm 3B}*\chi)(x_{j}) \nonumber \\
& = \sum_{1\leq i,j \leq N}U^{\rm 3B}(x_{i}-x_{j})\frac{(U^{\rm 3B}*\chi)(x_{i})+(U^{\rm 3B}*\chi)(x_{j})}{2} \nonumber \\
& \geq \sum_{1\leq i,j \leq N} \sqrt{U^{\rm 3B}*\chi}(x_{i}) U^{\rm 3B}(x_{i}-x_{j}) \sqrt{U^{\rm 3B}*\chi}(x_{j}).
\end{align}
It is worth noting that the nonnegativity of $U^{\rm 3B} * \chi$ follows from the assumptions that each term is nonnegative. Moreover, the convolution $U^{\rm 3B} * \chi$ is well defined since $U^{\rm 3B} \in L^{\infty}(\mathbb R^{2})$ (which is the consequence of the assumption that $\widehat{U^{\rm 3B}} \in L^{1}(\mathbb R^{2})$) and $\chi \in L^{1}(\mathbb R^{2})$.

Now, since $U^{\rm 3B}$ is positive definite, in the sense that $\widehat{U^{\rm 3B}} \geq 0$, we have
$$
\iint_{\mathbb R^{4}}f(y)\sqrt{U^{\rm 3B}*\chi}(y)U^{\rm 3B}(y-z)f(z)\sqrt{U^{\rm 3B}*\chi}(z){\rm d}y{\rm d}z = \int_{\mathbb R^{2}}\widehat{U^{\rm 3B}}(k)\left|\widehat{f\sqrt{U^{\rm 3B}*\chi}}(k)\right|^{2} \geq 0.
$$
With the choice of the Radon measure $f$ at the beginning of the proof, this is equivalent to
\begin{align}\label{ineq:three-one-3}
\text{RHS } \eqref{ineq:three-one-2} \geq{} & 2\sum_{1\leq i\leq N} \sqrt{U^{\rm 3B}*\chi}(x_{i}) (U^{\rm 3B}*\chi\sqrt{U^{\rm 3B}*\chi})(x_{i})\nonumber \\
& - \iint_{\mathbb R^{4}}\chi(y)\sqrt{U^{\rm 3B}*\chi}(y)U^{\rm 3B}(y-z)\chi(z)\sqrt{U^{\rm 3B}*\chi}(z){\rm d}y{\rm d}z.
\end{align}
Putting all together \eqref{ineq:three-one-1}, \eqref{ineq:three-one-2}, \eqref{ineq:three-one-3} we obtain the desired inequality \eqref{ineq:three-one}.
\end{proof}

We have reached the position to establish the lower bound on the quantum energy. On the one hand, the kinetic energy is estimated using \eqref{ineq:HO}. On the other hand, the attractive two-body interaction is estimated from below by the one-body term, employing the Levy--Leblond technique \cite{LevLeb-69}. Finally, the three-body interaction is estimated using \Cref{lem:three-one}. Combining these estimates, for any many-body wave function $\Psi_{N}$, by substituting $U^{\rm 3B}$ in \eqref{ineq:three-one} with $U^{\rm 3B}_{N^{\beta}}$ and setting $\chi = N\varrho_{\gamma_{\Psi_{N}}^{(1)}}$, we obtain
\begin{align*}
& \frac{1}{N^{3}}\left\langle \Psi_{N} \left| \sum_{1\leq i \ne j \ne k \leq N}U^{\rm 3B}(x_{i}-x_{j})U^{\rm 3B}(x_{i}-x_{k}) \right| \Psi_{N} \right\rangle \\
& \geq \iint_{\mathbb R^{4}}\varrho_{\gamma_{\Psi_{N}}^{(1)}}(y)\sqrt{U^{\rm 3B}*\varrho_{\gamma_{\Psi_{N}}^{(1)}}}(y)U^{\rm 3B}(y-z)\varrho_{\gamma_{\Psi_{N}}^{(1)}}(z)\sqrt{U^{\rm 3B}*\varrho_{\gamma_{\Psi_{N}}^{(1)}}}(z){\rm d}y{\rm d}z - 3N^{4\beta-1}\|U^{\rm 3B}\|_{L^{\infty}}^{2}.
\end{align*}
In the above, we have simply assumed that the error term in \eqref{ineq:three-one} is estimated by $3N^{2}\|U^{\rm 3B}_{N^{\beta}}\|_{L^{\infty}}^{2} \leq 3N^{4\beta+2}\|U^{\rm 3B}\|_{L^{\infty}}^{2}$. It is worth noting that $U^{\rm 3B} \in L^{\infty}(\mathbb R^{2})$ since $\widehat{U^{\rm 3B}} \in L^{1}(\mathbb R^{2})$. Consequently, we arrive at the lower bound
$$
E^{\rm QM}_{a,b,N} \geq E^{\rm mH}_{a,b,N}[1] - CaN^{2\alpha-1} - CbN^{4\beta-1}.
$$
In this expression, $E^{\rm mH}_{a,b,N}[1]$ denotes the \emph{modified} Hartree energy
\begin{equation}\label{energy:h-modified}
E_{a,b,N}^{\rm mH}[1] := \inf\left\{\mathcal{E}_{a,b,N}^{\rm mH}[v] : v \in H^{1}(\mathbb R^{2}), \int_{\mathbb R^{2}}|v|^{2}=1\right\}
\end{equation}
where the \emph{modified} Hartree functional $\mathcal{E}_{a,b,N}^{\rm mH}$ is defined as
\begin{align}\label{functional:h-modified}
\mathcal{E}_{a,b,N}^{\rm mH}[v] := & \int_{\mathbb R^{2}} |\nabla v(x)|^{2}{\rm d}x - \frac{a}{2}\iint_{\mathbb R^{4}} U^{\rm 2B}_{N^{\alpha}}(x-y)|v(x)|^{2}|v(y)|^{2}{\rm d}x{\rm d}y \nonumber \\
& + \frac{b}{6}\iint_{\mathbb R^{4}}|v(x)|^{2}\sqrt{U^{\rm 3B}_{N^{\beta}}*|v|^{2}}(x)U^{\rm 3B}_{N^{\beta}}(x-y)|v(y)|^{2}\sqrt{U^{\rm 3B}_{N^{\beta}}*|v|^{2}}(y){\rm d}x{\rm d}y.
\end{align}
It is noteworthy that, following the proof of \Cref{lem:cv-hartree-potentials}, $E^{\rm mH}_{a,b,N}[1]$ is essentially equivalent to $E^{\rm H}_{a,b,N}[1]$ in the limit $N\to\infty$. In particular, the results presented in \Cref{thm:behaviors-h} remain valid for $E_{a,b,N}^{\rm mH}[1]$ in \eqref{energy:h-modified} as well as its ground states. Consequently, \Cref{thm:qm} follows from \Cref{thm:behaviors-h,thm:qm-h}.

\appendix
\section{A functional inequality}\label{app:inequality}

In this Appendix, we present a functional inequality, which is instrumental in the proof of \Cref{lem:cv-hartree-potentials} regarding the convergence of the interaction potentials.

\begin{lemma}\label{lem:kinetic}
For every $v \in H^{1}(\mathbb R^{2})$ and $0 \leq g \in L^{1}(\mathbb R^{2})$ with $\int_{\mathbb R^{2}}g = 1$, we have
$$
\left|\nabla \sqrt{g*|v|^{2}}\right| \leq \sqrt{g*|\nabla v|^{2}}.
$$
\end{lemma}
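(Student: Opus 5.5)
The idea is to compute the gradient of $\rho := g*|v|^{2}$ explicitly and bound it by Cauchy--Schwarz with respect to the probability measure $g(x-y)\,{\rm d}y$. This is the same convexity mechanism that underlies the Hoffmann--Ostenhof inequality \eqref{ineq:HO}.

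First we work with smooth, compactly supported $v$, and we may also assume $g$ is smooth (mollify $g$ if necessary). Then $\rho = g*|v|^{2}$ is smooth and nonnegative, and differentiating under the convolution gives $\nabla \rho = g*\nabla|v|^{2}$. Pointwise we have $\nabla |v|^{2} = 2\,{\rm Re}(\overline{v}\nabla v)$, hence $|\nabla|v|^{2}| \leq 2|v||\nabla v|$. Since $g\ge 0$ and $\int g=1$, the Cauchy--Schwarz inequality in the measure $g(x-y){\rm d}y$ yields
$$
|\nabla \rho(x)| \leq 2\int_{\mathbb R^{2}} g(x-y)|v(y)||\nabla v(y)|{\rm d}y \leq 2\sqrt{(g*|v|^{2})(x)}\sqrt{(g*|\nabla v|^{2})(x)}.
$$
On the set $\{\rho>0\}$ we have $\nabla\sqrt{\rho} = \nabla\rho/(2\sqrt{\rho})$, and dividing the bound above by $2\sqrt{\rho}$ gives the claim there. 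To avoid the division by zero, one works with $\sqrt{\rho+\varepsilon}$ instead: $|\nabla\sqrt{\rho+\varepsilon}| \le |\nabla\rho|/(2\sqrt{\rho}) \le \sqrt{g*|\nabla v|^{2}}$ holds everywhere, and then one lets $\varepsilon\to0$.

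The general case $v\in H^{1}(\mathbb R^{2})$ and $g\in L^{1}$ follows by density, and this approximation step is the only delicate point. Take $v_{n}\to v$ in $H^{1}$, which gives $g*|v_{n}|^{2}\to g*|v|^{2}$ and $g*|\nabla v_{n}|^{2}\to g*|\nabla v|^{2}$ in $L^{1}$ by Young's inequality. The functions $\sqrt{\rho_{n}+\varepsilon}-\sqrt\varepsilon$ are then bounded in $H^{1}_{\rm loc}$, and their gradients satisfy the pointwise bound uniformly in $n$. Passing to the limit in the distributional sense, we use that a pointwise inequality $|\nabla u_{n}|\le h_{n}$ with $h_{n}\to h$ in $L^{2}$ and $u_{n}\to u$ in $L^{1}_{\rm loc}$ implies $|\nabla u|\le h$ almost everywhere. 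The reason is that the set $\{(u,F): |\nabla u|\le F\}$ is convex and closed, and is therefore weakly closed. Mollifying $g$ and removing the mollifier is handled the same way. This yields $|\nabla\sqrt{g*|v|^{2}}| \le \sqrt{g*|\nabla v|^{2}}$ almost everywhere, as claimed.
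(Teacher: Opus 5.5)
Your proof is correct and follows essentially the same route as the paper: differentiate $g*|v|^{2}$ under the convolution, use the pointwise bound $|\nabla |v|^{2}|\leq 2|v||\nabla v|$ (the paper's diamagnetic step), and apply Cauchy--Schwarz with respect to the probability measure $g(x-y)\,{\rm d}y$. Your $\varepsilon$-regularization and density/weak-closedness argument rigorously justify what the paper does only formally, since it divides by $\sqrt{g*|v|^{2}}$ without addressing its zeros or the regularity of $v$; mollifying $g$ is unnecessary, because $g*|v|^{2}$ is already smooth when $v\in C_c^\infty$ and $g\in L^{1}$.
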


\begin{proof}
We have
$$
\left|\nabla \sqrt{g*|v|^{2}}\right| = \frac{|\nabla (g*|v|^{2})|}{2\sqrt{g*|v|^{2}}} = \frac{|g*|v|\nabla |v||}{\sqrt{g*|v|^{2}}} \leq \frac{g*|v||\nabla v|}{\sqrt{g*|v|^{2}}} \leq \sqrt{g*|\nabla v|^{2}}
$$
where we have used the diamagnetic inequality in the next-to-last inequality and the Cauchy--Schwarz inequality in the last inequality.
\end{proof}

\end{document}